\journal{Automatica}
\def \R{{\mathbb R}}
\def \N{{\mathbb N}}
\def \KL{\mathcal{KL}}
\def \ki{\mathcal{K}_{\infty}}
\def \U{\mathcal{U}}
\def \S{\mathcal{S}}
\def \T{\mathcal{T}}
\def \comp{\,{\scriptstyle\circ}\,}
\def\K{\mathcal{K}}
\def\Ki{\K_{\infty}}
 \def\mer{\hfill $\circ$}
\DeclareMathOperator*{\esssup}{ess.sup}
\newtheorem{teo}{Theorem}[section]
\newtheorem{lema}[teo]{Lemma}
\newtheorem{cor}[teo]{Corollary}
\newtheorem{prop}[teo]{Proposition}
\newtheorem{claim}{Claim}
\newdefinition{defin}[teo]{Definition}
\newdefinition{as}{Assumption}
\newdefinition{rem}{Remark}
\newproof{proof}{\textbf{Proof}}
\renewcommand{\qed}{$\hfill\blacksquare$}
\begin{document}

\begin{frontmatter}

\title{Strong ISS implies Strong iISS for Time-Varying Impulsive Systems}


\author[HHaddress]{Hernan Haimovich\corref{mycorrespondingauthor}}
\cortext[mycorrespondingauthor]{Corresponding author}
\ead{haimovich@cifasis-conicet.gov.ar}

\author[JLaddress]{Jos\'e L. Mancilla-Aguilar}
\ead{jmancill@itba.edu.ar}

\address[HHaddress]{International Center for Information and Systems Science (CIFASIS), CONICET-UNR, Ocampo y Esmeralda, 2000 Rosario, Argentina.}
\address[JLaddress]{Departamento de Matem\'atica, Instituto Tecnol\'ogico de Buenos Aires, Av. Eduardo Madero 399, Buenos Aires, Argentina.}

\begin{abstract}
  For time-invariant (nonimpulsive) systems, it is already well-known that the input-to-state stability (ISS) property is strictly stronger than integral input-to-state stability (iISS). Very recently, we have shown that under suitable uniform boundedness and continuity assumptions on the function defining system dynamics, ISS implies iISS also for time-varying systems. In this paper, we show that this implication remains true for impulsive systems, provided that asymptotic stability is understood in a sense stronger than usual for impulsive systems.
\end{abstract}

\begin{keyword}
  Impulsive systems, nonlinear systems, time-varying systems, input-to-state stability, hybrid systems.
\end{keyword}

\end{frontmatter}


\section{Introduction}
\label{sec:introduction}
One of the main issues in control system theory concerns understanding the dependence of state trajectories on inputs. In this regard, the input-to-state stability (ISS) and integral-ISS (iISS) are arguably the most important and useful state-space based nonlinear definitions of stability for systems with inputs. 

The notions of ISS and iISS, originally introduced for time-invariant continuous-time systems in \citet{sontag_tac89} and \citet{sontag_scl98},  respectively, were subsequently extended and studied for other classes of systems: time-varying systems \citep{edwlin_cdc00}, discrete-time systems \citep{jiawan_auto01}, switched systems \citep{mangar_scl01, haiman_tac18}, impulsive systems \citep{heslib_auto08}, hybrid systems \citep{caitee_scl09,norkha_mcss17} and infinite dimensional systems \citep{dasmir_mcss13,mirwir_tac17}.

A natural question regards the exact relationship between the ISS and iISS properties. Since the introduction of the iISS property it is known that ISS implies iISS and that the converse does not hold for time-invariant continuous-time systems \citep{sontag_scl98}. The same implication was proved for discrete-time systems \citep{ang_scl99}, switched systems under arbitrary switching \citep{mangar_scl01} and hybrid systems \citep{norkha_mcss17}, assuming time-invariance. The corresponding proofs employ Lyapunov characterizations of the ISS or of the global uniform asymptotic stability (GUAS) properties in a fundamental way. This hinders the extension to classes of systems for which Lyapunov characterizations do not exist, such as switched systems under restricted switching or impulsive systems. Very recently, \citet{haiman_auto19} proved that ISS implies iISS for families of time-varying and switched nonlinear systems without resorting to any Lyapunov converse theorem, and, in this way, opening the door to proving the implication for other types of systems.

This paper deals with impulsive systems with inputs, i.e. dynamical systems whose state evolves continuously most of the time but may exhibit jumps (discontinuities) at isolated time instants, and where the inputs affect both the flow (i.e. the continuous evolution) and the jump equations \citep{yanpen_mcs19}. Sufficient conditions for ISS and iISS of impulsive systems with inputs, based on Lyapunov-type functions, have been derived in \citet{heslib_auto08}. Since the appearance of \citet{heslib_auto08}, many works have addressed the stability of impulsive systems with inputs from ISS-related standpoints, giving sufficient conditions for the ISS and/or iISS in terms of Lyapunov functions \citep{chezhe_auto09,liuliu_auto11,daskos_nahs12,dasmir_siamjco13,liudou_scl14,lizha_auto17,dasfek_nahs17,lili_scl18,penden_scl18,peng_ietcta18,ninhe_is18,lili_mcs19,manhai_tac19arxiv}. In addition, some results for hybrid systems may also be applicable to impulsive systems  \citep{libnes_tac14,miryan_auto18,liuhil_amc18}.

Despite the great progress of the stability theory of impulsive systems with inputs during the last decade, up to our best knowledge the exact relationship between the ISS and iISS properties has not yet been established for this type of systems. The main contribution of the current paper is thus solving this open problem by proving that ISS implies iISS. The implication is proved assuming that the ISS and iISS properties are understood in a stronger sense than is usually considered in the literature of impulsive systems, more akin to that employed for hybrid systems. As is well-known, the ISS/iISS properties impose a bound on the state trajectory comprising a decaying-to-zero term whose amplitude depends on the initial state value, and an input magnitude/energy-dependent term. As already explained in \citet{heslib_auto08}, the decaying term in the ISS/iISS definitions employed for impulsive systems decays as time elapses but is insensitive to the occurrence of jumps. In this paper, we consider definitions of ISS/iISS where the decaying term decreases also when a jump occurs (see Definition \ref{def:stab} below), in agreement with those considered in the context of hybrid systems \citep{caitee_cdc05,caitee_scl09,norkha_mcss17}. As a corollary of our main result, we obtain that ISS implies iISS in the usual sense when the impulse-time sequence satisfies a specific bound on the number of impulse times on each bounded interval. This condition is satisfied, for example, when the impulse-time sequence is such that the flow periods (i.e. between jumps) have a minimum or average dwell time.

The current paper generalizes some of our previous results \citep{haiman_tac18,haiman_aadeca18,haiman_auto19,haiman_rpic19}. Our proof strategy conceptually follows that of \citet{haiman_auto19}, in the sense of being based on bounding the difference between state trajectories. The current results cannot be obtained directly (mutatis mutandis) from the previous ones, mainly because we do not require the jump maps to satisfy any kind of Lipschitz continuity property. This led to the development of novel techniques for comparing trajectories, especially suited to impulsive systems without Lipschitz continuity of the jump maps. The specific similarities and differences with respect to our previous work are explained as appropriate along the text.


The remainder of the paper is organized as follows. This section ends with a brief description of the notation employed. In Section~\ref{sec:stab-impuls-syst}, we precisely explain the type of systems considered and the stability concepts employed. In Section~\ref{sec:iISS-char}, we provide a characterization of the strong iISS property. This characterization is employed in Section~\ref{sec:iss-iiss} in order to establish that strong ISS implies strong iISS. The proofs of some technical intermediate results are given in Section~\ref{sec:proofs}. Conclusions are given in Section~\ref{sec:conclusions}.

\textbf{Notation.} $\N$, $\R$, $\R_{>0}$ and $\R_{\ge 0}$ denote the natural numbers, reals, positive reals and nonnegative reals, respectively. $|x|$ denotes the Euclidean norm of $x \in \R^p$. We write $\alpha\in\K$ if $\alpha:\R_{\ge 0} \to \R_{\ge 0}$ is continuous, strictly increasing and $\alpha(0)=0$, and $\alpha\in\Ki$ if, in addition, $\alpha$ is unbounded. We write $\beta\in\KL$ if $\beta:\R_{\ge 0}\times \R_{\ge 0}\to \R_{\ge 0}$, $\beta(\cdot,t)\in\Ki$ for any $t\ge 0$ and, for any fixed $r\ge 0$, $\beta(r,t)$ monotonically decreases to zero as $t\to \infty$. From any function $h:I\subset \R\to \R^p$, $h(t^-)$ and $h(t^+)$ denote, respectively, the left and right limits of $h$ at $t\in \R$, when they exist and are finite. For every $n\in\N$ and $r\ge 0$, we define the closed ball $B_r^n := \{x\in\R^n : |x| \le r\}$. Without risk of confusion, if $\gamma = \{\tau_k\}_{k=1}^N$, then $\gamma$ can be interpreted as both the sequence $\{\tau_k\}_{k=1}^N$ and the set $\{\tau_k:k\in\N,\ k\le N\}$ (even if $N=\infty$). For $a,b\in\R$, we define $a \wedge b := \min\{a,b\}$.

\section{Stability of Impulsive Systems with Inputs}
\label{sec:stab-impuls-syst}

\subsection{Impulsive systems with inputs}
\label{sec:prel-defs}

Consider the time-varying impulsive system with inputs $\Sigma$ defined by the equations
 \begin{subequations}
   \label{eq:is}
   \begin{align}
     \label{eq:is-ct}
     \dot{x}(t) &=f(t,x(t),u(t)),\phantom{x(t^-)+g^-}\quad\text{for } t\notin \gamma,    \displaybreak[0] \\
     \label{eq:is-st}
     x(t) &=x(t^-)+g(t,x(t^-),u(t)),\phantom{f} \quad\text{for } t\in \gamma,
   \end{align}
 \end{subequations}
where $t\ge 0$, the state variable $x(t)\in \R^n$, the input variable $u(t)\in \R^m$ and $f$ and $g$ are functions from $\R_{\ge 0}\times \R^n\times \R^m$ to $\R^n$ such that $f(t,0,0)=0$ and $g(t,0,0)=0$ for all $t\ge 0$, and the impulse-time sequence $\gamma=\{\tau_k\}_{k=1}^{N} \subset (0,\infty)$, with $N$ finite or $N=\infty$. We shall refer to $f$ and to (\ref{eq:is-ct}) as, respectively, the flow map and the flow equation and to $g$ and to (\ref{eq:is-st}) as, respectively, the jump map and the jump equation. By ``input'', we mean a Lebesgue measurable and locally essentially bounded function $u:[0,\infty)\to \R^m$; we denote by $\U$ the set of all the inputs. As is usual for impulsive systems, we only consider impulse-time sequences $\gamma=\{\tau_k\}_{k=1}^N$ that are strictly increasing and have no finite limit points, i.e. $\lim_{k\to \infty}\tau_k=\infty$ when the sequence is infinite; we employ $\Gamma$ to denote the set of all such impulse-time sequences. For any sequence $\gamma = \{ \tau_{k} \}^N_{k=1} \in \Gamma$ we define for convenience $\tau_0=0$; 
 nevertheless, $\tau_0$ is never an impulse time, because $\gamma \subset (0,\infty)$ by definition.

In order to guarantee the existence of Carath\'eodory solutions of the differential equation $\dot{x}(t)=f(t,x(t),u(t))$, we assume that $f(t,\xi,\mu)$ is Lebesgue measurable in $t$, continuous in $(\xi,\mu)$ and that for every compact interval $I\subset \R_{\ge 0}$ and every compact set $K\subset \R^n\times \R^m$ there exists an integrable function $m:I\to \R$ such that $|f(t,\xi,\mu)|\le m(t)$ for all $(t,\xi,\mu)\in I\times K$. Under these conditions, for each input $u\in \U$ the map $f_u(t,\xi)=f(t,\xi,u(t))$ satisfies the standard Carath\'eodory conditions \citep[see][]{hale_book80} and hence the (local) existence of solutions of the differential equation $\dot x(t)=f(t,x(t),u(t))$ is ensured.

The impulsive system $\Sigma$ is completely determined by the sequence of impulse times $\gamma$ and the flow and jump maps $f$ and $g$. Hence, we write $\Sigma=(\gamma,f,g)$.  Given $\gamma \in \Gamma$ and an interval $I\subset [0,\infty)$, we define $n^\gamma_I$ as the number of elements of $\gamma$ that lie in the interval $I$: 
\begin{align}
  n^\gamma_I &:= \# \big[ \gamma\cap I \big].
\end{align}

A solution of $\Sigma=(\gamma,f,g)$ corresponding to an initial time $t_0\ge 0$, an initial state $x_0\in \R^n$ and an input $u\in \U$ is a 
function $x:[t_0,T_x)\to \R^n$ such that: 
\begin{enumerate}[i)]
\item $x(t_0)=x_0$; 
\item $x$ is locally absolutely continuous on each interval $J = [t_1,t_2) \subset [t_0,T_x)$ without points of $\gamma$ in its interior, and $\dot{x}(t)=f(t,x(t),u(t))$ for almost all $t\in J$; and \label{item:solflow}
\item for all $t\in \gamma \cap (t_0,T_x)$, the left limit $x(t^-)$ exists and is finite, and it happens that $x(t) = x(t^-)+g(t,x(t^-),u(t))$.\label{item:soljump}
\end{enumerate}
Note that \ref{item:solflow}) implies that for all $t\in [t_0,T_x)$, $x(t)=x(t^+)$, i.e. $x$ is right-continuous at $t$.

The solution $x$ is said to be maximally defined if no other solution $y:[t_0,T_y)$ satisfies $y(t) = x(t)$ for all $t\in [t_0,T_x)$ and has $T_y > T_x$. 
 We will use $\T_{\Sigma}(t_0,x_0,u)$ to denote the set of maximally defined solutions of $\Sigma$ corresponding to initial time $t_0$, initial state $x_0$ and input $u$. 
%
Every solution $x\in\T_{\Sigma}(t_0,x_0,u)$ with $t_0 \ge 0$, $x_0\in\R^n$ and $u\in\U$ satisfies
\begin{multline}
  \label{eq:solintform}
  x(t) = x(t_0) + \int_{t_0}^t f(s,x(s),u(s)) ds\\ +
  \sum_{\tau\in\gamma\cap(t_0,t]} g(\tau,x(\tau^-),u(\tau)),\quad\forall t\in [t_0,T_x).
\end{multline}
\begin{rem}
  Note that even if $t_0 \in \gamma$, any solution $x\in\T_{\Sigma}(t_0,x_0,u)$ begins its evolution by ``flowing'' and not by ``jumping''. This is because in item~\ref{item:soljump}) above, the time instants where jumps occur are those in $\gamma \cap (t_0,T_x)$.\mer
\end{rem}

\subsection{Families of impulsive systems}
\label{sec:families}

Often one is interested in determining whether some stability property holds not just for a single impulse-time sequence $\gamma\in\Gamma$ but also for some family $\S \subset \Gamma$. For example, the family $\S$ could contain all those impulse-time sequences having some minimum, maximum or average dwell time. Another situation of interest is to determine if some stability property holds not just for a single pair of functions $(f,g)$ but also for all pairs $(f,g)$ 
belonging to some given set $\mathcal{F}$.  
To take into account these and other situations, we consider a parametrized family $\Sigma_\Lambda := \{\Sigma_{\lambda}=(\gamma_{\lambda},f_{\lambda},g_{\lambda})\}_{\lambda \in \Lambda}$ of impulsive systems with inputs, where $\Lambda$ is an index set (i.e. an arbitrary nonempty set). For example, if we are interested in studying stability properties of systems modelled by (\ref{eq:is}) which hold uniformly over a class $\S\subset \Gamma$, then we set $\S$ as the index set, and consider the parametrized family of systems $\{\Sigma_{\gamma}=(\gamma,f,g)\}_{\gamma \in \S}$. By taking as index set $\Lambda = \mathcal{F}$ and considering the family $\{\Sigma_{(f,g)} = (\gamma,f,g) \}_{(f,g) \in \Lambda}$ we can handle the other mentioned situation. Another interesting situation we can handle in this way is that of impulsive switched systems \citep[see][for details]{manhai_tac19arxiv}.

\subsection{Stability definitions}
\label{sec:stab-defs}

In the context of impulsive systems, the input can be interpreted as having both a continuous-time and an impulsive component. From (\ref{eq:is-st}) one observes that the values of $u$ at the instants $t\in\gamma$ may instantaneously affect the state trajectory. For this reason, input bounds suitable for the required stability properties have to account for the instantaneous values $u(t)$ at $t\in\gamma$. Given an input $u \in \U$, an impulse-time sequence $\gamma \in \Gamma$, an interval $I\subset \R_{\ge 0}$, and functions $\rho_1,\rho_2\in\Ki$, we thus define
\begin{align}
  \| u_I \|_{\infty,\gamma} &:= \max\left\{ \esssup_{t\in I} |u(t)| , \sup_{t\in \gamma\cap I} |u(t)| \right\}, \label{eq:iss-norm}\\
  \| u_I \|_{\rho_1,\rho_2,\gamma} &:= \int_I \rho_1(|u(s)|) ds + \sum_{s\in \gamma\cap I} \rho_2(|u(s)|). \label{eq:iiss-norm}
\end{align}
When $I=[0,\infty)$ we simply write $u$ instead of $u_I$. These definitions are in agreement with those employed in \citet{caitee_scl09, norkha_mcss17} in the context of hybrid systems. In what follows, $\mathbf{0}$ denotes the identically zero input.
\begin{defin}
  \label{def:stab}
  We say that the parametrized family $\Sigma_\Lambda = \{\Sigma_{\lambda} = (\gamma_{\lambda},f_{\lambda},g_{\lambda})\}_{\lambda \in \Lambda}$ of impulsive systems is  
  \begin{enumerate}[a)]
  \item strongly 0-GUAS if there exist $\beta \in \KL$ such that for all $\lambda\in \Lambda$, $t_0\ge 0$, $x_0\in \R^n$, and $x\in \T_{\Sigma_{\lambda}}(t_0,x_0,\mathbf{0})$, it happens that for all $t\in [t_0,T_x)$,
    \begin{align}
      \label{eq:0-guas}
      |x(t)| &\le \beta \left (|x_0|,t-t_0+n^{\gamma_{\lambda}}_{(t_0,t]} \right ).
    \end{align}
  \item strongly ISS if there exist $\beta \in \KL$ and $\rho \in \ki$ such that 
    \begin{align}
      \label{eq:ciss}
      \hspace{-2mm}|x(t)| &\le \beta \left (|x_0|,t-t_0+n^{\gamma_{\lambda}}_{(t_0,t]}\right )+\rho(\|u_{(t_0,t]}\|_{\infty,\gamma_{\lambda}});
    \end{align}
  \item strongly iISS if there exist $\beta \in \KL$ and $\alpha,\rho_1,\rho_2 \in \ki$ such that 
    \begin{align}
      \label{eq:ciiss}
      \hspace{-3mm}\alpha(|x(t)|) &\le \beta \left (|x_0|,t-t_0+n^{\gamma_{\lambda}}_{(t_0,t]} \right) + \| u_{(t_0,t]} \|_{\rho_1,\rho_2,\gamma_{\lambda}};
    \end{align}
  \item UBEBS \citep{angson_dc00} if there exist $\alpha,\rho_1,\rho_2\in\ki$ and $c\ge 0$ such that 
    \begin{align}
      \label{eq:cubebs}
      \alpha(|x(t)|) &\le |x_0| + \| u_{(t_0,t]} \|_{\rho_1,\rho_2,\gamma_{\lambda}} + c;
    \end{align}
  \end{enumerate}
  where (\ref{eq:ciss})--(\ref{eq:cubebs}) hold for all $\lambda\in \Lambda$, $t_0\ge 0$, $x_0\in \R^n$, $u\in \U$, $x\in \T_{\Sigma_{\lambda}}(t_0,x_0,u)$, and $t\in [t_0,T_x)$. The pair $(\rho_1,\rho_2)$ in (\ref{eq:ciiss}) or (\ref{eq:cubebs}) will be referred to as an iISS or UBEBS gain, respectively.
\end{defin}
\begin{rem}
\label{rem:caus-ISS}
Due to causality and the Markov property, equivalent definitions are obtained if $u_{(t_0,t]}$ is replaced by $u$ in (\ref{eq:ciss}), (\ref{eq:ciiss}) or (\ref{eq:cubebs}). Note that we do not require the solutions of (\ref{eq:is}) to be defined for all $t\ge t_0$ in the definitions of the different stability properties. Nevertheless, well-known results for ordinary differential equations ensure the existence of the solution on $[t_0,\infty)$ in each case. \mer
\end{rem}
\begin{rem}
  \label{rem:sISS_s0-GUAS}
  It is evident that strong ISS implies strong 0-GUAS (just set $u=\mathbf{0}$).
\end{rem}

All the properties in Definition~\ref{def:stab} are uniform with respect to both initial time $t_0$ and the different systems within the family $\Sigma_\Lambda$. The ISS and iISS properties are called ``strong'' because the decaying term given by the function $\beta$ forces additional decay whenever a jump occurs. The corresponding weak properties are obtained by replacing the second argument of $\beta$ by $t-t_0$ \citep[see][]{manhai_tac19arxiv}. Strong ISS (and iISS) is in agreement with the ISS property for hybrid systems as in \citet{libnes_tac14}. 

The strong and weak ISS/iISS become equivalent under the following condition, which is satisfied when the time periods between impulses have a minimum or average dwell time. 
\begin{defin}
  \label{def:UIB}
  Consider a set $\S\subset\Gamma$ of impulse-time sequences. We say that $\S$ is uniformly incrementally bounded (UIB) if there exists a continuous and nondecreasing function $\phi : \R_{\ge  0} \to \R_{\ge 0}$ so that $n^{\gamma}_{(t_0,t]} \le \phi(t-t_0)$ for every $\gamma\in \S$ and all $t>t_0\ge 0$. 
\end{defin}

The proof of the following result can be obtained following the lines of that of Proposition 2.3 in \citet{manhai_tac19arxiv}.
\begin{prop} 
  \label{prop:weakstrong}
  Let $\Sigma_\Lambda = \{\Sigma_{\lambda} = (\gamma_{\lambda}, f_{\lambda}, g_{\lambda})\}_{\lambda \in \Lambda}$. Suppose that $\{\gamma_{\lambda}:\lambda \in \Lambda\}$ is UIB. Then $\Sigma_\Lambda$ is strongly ISS (resp. iISS) if and only if it is weakly ISS (iISS).
\end{prop}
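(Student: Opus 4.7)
The plan is to prove both implications of Proposition~\ref{prop:weakstrong} separately. The direction strong $\Rightarrow$ weak I would dispose of immediately: since $\beta(r,\cdot)$ is nonincreasing and $n^{\gamma_{\lambda}}_{(t_0,t]}\ge 0$, a strong ISS/iISS bound is \emph{a fortiori} a weak bound with the same $\KL$ function and input gains.

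For the converse under the UIB hypothesis, the heart of the argument is a $\KL$-reshaping that absorbs the jump count into the second argument of $\beta$. My aim is to build $\tilde\beta\in\KL$ (depending only on $\beta$ and the envelope $\phi$ from Definition~\ref{def:UIB}) such that, for every $\lambda$, $t_0$ and $t\ge t_0$,
\[
\beta(|x_0|,t-t_0) \;\le\; \tilde\beta\bigl(|x_0|,(t-t_0)+n^{\gamma_{\lambda}}_{(t_0,t]}\bigr).
\]
To this end I would set $\psi(\tau):=\tau+\phi(\tau)$, which inherits continuity and strict increase from $\phi$, satisfies $\psi(\tau)\to\infty$, and has a continuous, nondecreasing inverse $\psi^{-1}$ on $[\phi(0),\infty)$ that I extend by zero on $[0,\phi(0))$. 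Defining $\tilde\beta(r,s):=\beta(r,\psi^{-1}(s))$, the $\KL$ axioms follow routinely: $\tilde\beta(\cdot,s)=\beta(\cdot,\psi^{-1}(s))\in\Ki$ for each $s\ge 0$; $\tilde\beta(r,\cdot)$ is nonincreasing because $\psi^{-1}$ is nondecreasing and $\beta(r,\cdot)$ is nonincreasing; and $\tilde\beta(r,s)\to 0$ as $s\to\infty$ because $\psi^{-1}(s)\to\infty$.

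The reshaping inequality then reduces to a one-line computation: UIB gives $n^{\gamma_{\lambda}}_{(t_0,t]}\le\phi(t-t_0)$, whence $(t-t_0)+n^{\gamma_{\lambda}}_{(t_0,t]}\le\psi(t-t_0)$; applying $\psi^{-1}$ (or invoking the extension by zero in the edge case when the argument lies below $\phi(0)$) yields $\psi^{-1}\!\bigl((t-t_0)+n^{\gamma_{\lambda}}_{(t_0,t]}\bigr)\le t-t_0$, and the nonincreasing character of $\beta(|x_0|,\cdot)$ closes the estimate. Substituting this into the weak ISS bound (resp.\ weak iISS bound) produces the strong ISS (resp.\ iISS) bound with $\tilde\beta$ in place of $\beta$ and the same input gain(s). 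The only place where some care is needed is the handling of $\psi^{-1}$ at zero when $\phi(0)>0$, and the verification that the extension preserves membership in $\KL$; beyond this, I do not foresee any real obstacle, which is consistent with the paper's remark that the argument parallels Proposition~2.3 of \citet{manhai_tac19arxiv}.
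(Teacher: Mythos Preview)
Your argument is correct and is exactly the standard reshaping one expects here; the paper does not supply its own proof but defers to Proposition~2.3 of \citet{manhai_tac19arxiv}, and your construction $\psi(\tau)=\tau+\phi(\tau)$ with $\tilde\beta(r,s)=\beta(r,\psi^{-1}(s))$ is the natural way to carry that over. One tiny wording quibble: $\psi$ does not inherit strict increase \emph{from} $\phi$ (which is only nondecreasing) but from the identity term $\tau$; the mathematics is unaffected.
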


\section{A characterization of iISS}
\label{sec:iISS-char}
In this section we will show that under suitable hypotheses, the strong iISS of a parametrized family of impulsive systems with inputs is equivalent to the combination of UBEBS and strong 0-GUAS of the family. 

\subsection{Assumptions and statement}
\label{sec:ass-stat}

First, we note that if jumps do not occur ($\gamma = \emptyset$), then (\ref{eq:is}) becomes the type of system considered in \citet{haiman_tac18}. 
We thus require that the flow maps satisfy the conditions in Assumption~1 of \cite{haiman_tac18}. 
\begin{as}
  \label{as:ass-f-1}
  The functions $f_{\lambda} : \R_{\ge 0} \times \R^n \times \R^m \to \R^n$, $\lambda \in \Lambda$, satisfy the following:
  \begin{enumerate}[i)]
  \item \label{item:fbound} there exist $\nu_f \in \K$ and a nondecreasing function $N_f:\R_{\ge 0}\to \R_{>0}$ such that for all $\lambda\in \Lambda$, $|f_{\lambda}(t,\xi,\mu)|\le N_f(|\xi|)(1+\nu_f(|\mu|))$ for all $(t,\xi,\mu) \in \R_{\ge 0} \times \R^n \times \R^m$;
  \item for every $r>0$ and $\varepsilon >0$ there exists $\delta>0$ such that for all $\lambda\in \Lambda$ and all $t\ge 0$, $|f_{\lambda}(t,\xi,\mu) - f_{\lambda}(t,\xi,0)|<\varepsilon$ if $|\xi|\le r$ and $|\mu| \le \delta$;\label{item:fcont}
  \item  \label{item:lips} $f_{\lambda}(t,\xi,0)$ is locally Lipschitz in $\xi$, uniformly in $t$ and $\lambda$, i.e. for every $R>0$ there is a constant $L_R\ge 0$ so that for every $\lambda\in \Lambda$, $\xi_1,\xi_2 \in B_R^n$ and $t\ge 0$ it happens that $|f_{\lambda}(t,\xi_1,0) - f_{\lambda}(t,\xi_2,0)| \le L_{R} |\xi_1-\xi_2|$.
  \end{enumerate}
\end{as}
All the conditions imposed by Assumption~\ref{as:ass-f-1} on the flow maps are uniform over all the systems in the family. Item~\ref{item:fbound}) imposes a bound that is, in addition, uniform over all values of the time variable. Item~\ref{item:fcont}) requires a kind of continuity in the input variable at its zero value, uniformly over time and over states in compact sets. Item~\ref{item:lips}) requires that the flow map of the zero-input system be locally Lipschitz in the state variable, uniformly over time.

The Lipschitz condition in item~\ref{item:lips}) is required in order to allow the application of Gronwall inequality. If all the conditions of Assumption~\ref{as:ass-f-1} were imposed on the jump maps $g_\lambda$ as well, then the required characterization of strong iISS would follow, mutatis mutandis, from \citet{haiman_rpic19}. 
However, imposing such a Lipschitz continuity requirement on the jump maps is restrictive and unnecessary. We will thus require the following conditions.
\begin{as}
  \label{as:ass-g-1}
  The functions $g_{\lambda} : \R_{\ge 0} \times \R^n \times \R^m \to \R^n$, $\lambda \in \Lambda$, satisfy the following:
  \begin{enumerate}[i)]
  \item \label{item:gbound} there exist $\nu_g \in \K$ and a nondecreasing function $N_g:\R_{\ge 0}\to \R_{>0}$ such that for all $\lambda\in \Lambda$, $|g_{\lambda}(t,\xi,\mu)|\le N_g(|\xi|)(1+\nu_g(|\mu|))$ for all $(t,\xi,\mu) \in \R_{\ge 0} \times \R^n \times \R^m$;
  \item for every $r>0$ and $\varepsilon >0$ there exists $\delta>0$ such that for all $\lambda\in \Lambda$ and all $t\ge 0$, $|g_{\lambda}(t,\xi,\mu) - g_{\lambda}(t,\xi,0)|<\varepsilon$ if $|\xi|\le r$ and $|\mu| \le \delta$;\label{item:gcont}
  \item  \label{item:0-gcont} $g_{\lambda}(t,\xi,0)$ is continuous in $\xi$, uniformly in $t$ and $\lambda$, i.e. for every $R>0$ there is a function $\omega_{R}\in \ki$ so that for every $\xi_1,\xi_2 \in B_R^n$, $t\ge 0$ and $\lambda \in \Lambda$, it happens that $|g_{\lambda}(t,\xi_1,0) - g_{\lambda}(t,\xi_2,0)| \le \omega_{R}(|\xi_1-\xi_2|)$.
  \end{enumerate}
\end{as}
Items~\ref{item:gbound}) and~\ref{item:gcont}) of Assumption~\ref{as:ass-g-1} are identical to those of Assumption~\ref{as:ass-f-1}. By constrast, the Lipschitz continuity requirement of Assumption~\ref{as:ass-f-1}\ref{item:lips}) has been replaced by just continuity, keeping the corresponding uniformity with respect to the other variables. The removal of the Lipschitz continuity requirement on the jump maps causes the proof of our current results to become substantially different and harder than that of the previous ones \citep{haiman_aadeca18,haiman_auto19,haiman_rpic19}.

The main result of this section is the following charaterization of strong iISS for parametrized families of impulsive systems with inputs. 
\begin{teo}
  \label{UBEBSand0GASiffiiss}
  Consider the parametrized family $\Sigma_\Lambda = \{\Sigma_{\lambda}=(\gamma_{\lambda},f_{\lambda},g_{\lambda})\}_{\lambda\in \Lambda}$ and let Assumptions \ref{as:ass-f-1} and \ref{as:ass-g-1} hold. Then $\Sigma_\Lambda$ is strongly iISS if and only if it is strongly 0-GUAS and UBEBS.
\end{teo}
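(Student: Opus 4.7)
The forward (``only if'') direction is routine. Setting $u\equiv\mathbf{0}$ in (\ref{eq:ciiss}) yields strong 0-GUAS immediately with $\KL$ function $\alpha^{-1}\comp\beta$. The bound (\ref{eq:ciiss}) also gives $\alpha(|x(t)|) \le \sigma(|x_0|) + \|u_{(t_0,t]}\|_{\rho_1,\rho_2,\gamma_\lambda}$ with $\sigma:=\beta(\cdot,0)\in\K$; the UBEBS form (\ref{eq:cubebs}) then follows by a standard manipulation that replaces $\alpha$ on the left by a suitably slower $\Ki$ function in order to absorb $\sigma(|x_0|)$ into $|x_0|+c$ (cf.\ the treatment in \citet{angson_dc00} and, for the non-impulsive time-varying case, \citet{haiman_auto19}).

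For the reverse (``if'') direction, the plan is to adapt the dichotomy of \citet{angson_dc00} and \citet{haiman_auto19} to the impulsive setting by bounding the difference between a $u$-trajectory and a zero-input trajectory. The backbone is a \emph{trajectory-comparison lemma}: for every $R>0$, $T>0$ and $\varepsilon>0$ there exists $\delta>0$ such that, uniformly in $\lambda\in\Lambda$, $t_0\ge 0$, $x_0\in B_R^n$ and $u\in\U$ with $\|u_{(t_0,t_0+T]}\|_{\infty,\gamma_\lambda} \le \delta$, one has $|x(t)-x^0(t)| \le \varepsilon$ on $[t_0,t_0+T]$, where $x$ and $x^0$ are the maximal solutions emanating from $(t_0,x_0)$ under $u$ and under $\mathbf{0}$, respectively. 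Granted this lemma, strong iISS follows in the usual way: given $r$ and $\varepsilon$, pick $T=T(r,\varepsilon)$ from strong 0-GUAS so that $\beta(r,T)\le\varepsilon/2$, and then take $\delta(r,\varepsilon)$ from the comparison lemma so that small inputs force $|x(t)|\le\varepsilon$ once $t-t_0+n^{\gamma_\lambda}_{(t_0,t]}\ge T$, while on the complementary regime (large input ``energy'') the UBEBS bound (\ref{eq:cubebs}) directly controls the state. Interleaving these estimates, following a $\KL$-construction in the style of \citet{angson_dc00}, yields a bound of the form (\ref{eq:ciiss}).

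The decisive obstacle is proving the comparison lemma under Assumption~\ref{as:ass-g-1} alone. UBEBS first confines both $x$ and $x^0$ to a ball $B_{R'}^n$ on $[t_0,t_0+T]$ with $R'=R'(R,\delta,T)$. On every flow subinterval of $[t_0,t_0+T]\setminus\gamma_\lambda$, Assumption~\ref{as:ass-f-1}\ref{item:fcont}) renders $|f_\lambda(\cdot,\xi,u)-f_\lambda(\cdot,\xi,0)|$ uniformly small on $B_{R'}^n\times\{|\mu|\le\delta\}$, and Assumption~\ref{as:ass-f-1}\ref{item:lips}) lets Gronwall propagate $|e(t)|:=|x(t)-x^0(t)|$ with constant $L_{R'}$. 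At each jump $\tau_k\in\gamma_\lambda\cap(t_0,t_0+T]$, however, Assumptions~\ref{as:ass-g-1}\ref{item:gcont}) and~\ref{item:0-gcont}) only provide
\[
  |e(\tau_k)| \;\le\; |e(\tau_k^-)| + \omega_{R'}\bigl(|e(\tau_k^-)|\bigr) + \varepsilon_g(\delta),
\]
with $\omega_{R'}\in\Ki$ and $\varepsilon_g(\delta)\to 0^+$ as $\delta\to 0^+$. Because $\omega_{R'}$ is nonlinear and $n^{\gamma_\lambda}_{(t_0,t_0+T]}$ need not be bounded across the family (no UIB is assumed), a naive discrete-Gronwall iteration of this jump inequality is not controllable by a single $\Ki$ function of $\delta$. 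Closing this loop without Lipschitz jump maps is the technical core of the paper: I would expect the argument to rely on an equicontinuity/compactness property of $g_\lambda$ on $B_{R'}^n$ uniform in $\lambda$, combined with an a-priori upper estimate on the number of ``effectively distorting'' jumps (those at which $|u(\tau_k)|$ is not already negligible) derived from the iISS-type input norm, so that the iteration need only be closed over a controllable finite number of jumps. This is precisely the novel technique flagged by the authors in the introduction as distinguishing this work from their Lipschitz-jump predecessors.
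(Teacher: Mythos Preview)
Your forward direction matches the paper. For the reverse direction, you have the right skeleton (Angeli--Sontag--Wang dichotomy plus trajectory comparison) and you correctly isolate the obstacle, but the proposed resolution has a genuine gap. The speculated fix --- bounding the number of ``effectively distorting'' jumps via the iISS input norm --- does not work: even a jump with $u(\tau_k)=0$ yields $|e(\tau_k)|\le |e(\tau_k^-)|+\omega_{R'}(|e(\tau_k^-)|)$, so \emph{every} jump amplifies the error nonlinearly, not only those where the input is large. An energy bound on $u$ says nothing about how many zero-input jumps may occur in $[t_0,t_0+T]$, and without UIB that number is unbounded across the family. Your comparison lemma, as stated (ordinary-time horizon $[t_0,t_0+T]$, $\|\cdot\|_{\infty,\gamma_\lambda}$-small input), is therefore not true in general.

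The paper's resolution exploits precisely the \emph{strong} framework you set aside: the relevant horizon is not $[t_0,t_0+T]$ but the hybrid-time slab $\{t:\,t-t_0+n^{\gamma_\lambda}_{(t_0,t]}\le T\}$, on which the jump count is automatically at most $T$. Thus the nonlinear jump iteration needs to be closed only a fixed finite number of times. Concretely, the paper proves a generalized impulsive Gronwall inequality (Lemma~\ref{lem:ggi2}) that tracks the error through $k$ jumps via recursively defined continuous functions $h_k(p,t)$ with $h_k(0,\cdot)\equiv 0$; this feeds a comparison estimate (Lemma~\ref{lem:genlem3}) whose right-hand side (\ref{eq:genlem3bnd}) carries $n^{\gamma_\lambda}_{(t_0,t]}$ explicitly and uses the iISS-type norm $\|u\|_{\chi_f,\chi_g,\gamma_\lambda}$ rather than $\|\cdot\|_{\infty,\gamma_\lambda}$. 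After upgrading UBEBS to the $c=0$ form (Lemma~\ref{lem:0UBEBS}), the proof verifies an $\epsilon$--$\delta$ characterization of strong iISS (Theorem~\ref{thm:eps-delta}); the dichotomy and the pigeonhole selection of a subinterval with small input energy are carried out on hybrid-time intervals $s_{j+1}=\inf\{t\ge s_j:\,t-s_j+n^{\gamma_\lambda}_{(s_j,t]}\ge\tilde T\}$, which guarantees at most $\tilde k=\lceil\tilde T\rceil+1$ jumps per interval, so that only $h_{\tilde k}$ needs to be made small.
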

The proof of Theorem~\ref{UBEBSand0GASiffiiss} is given in Section~\ref{sec:proof-iiss-charact}. Note that Theorem~\ref{UBEBSand0GASiffiiss} does not require uniqueness of solutions under nonzero inputs because the local Lipschitz continuity of the flow maps imposed by Assumption~\ref{as:ass-f-1}\ref{item:lips}) applies only under zero input.

\subsection{Preliminary results}
\label{sec:prel-results-iISS}

The proof of Theorem \ref{UBEBSand0GASiffiiss} requires some preliminary lemmas. The first of these is a type of generalized Gronwall inequality for impulsive systems. The proof is given in Section~\ref{sec:proof-lem-ggi2}.
\begin{lema}
  \label{lem:ggi2}
  Let $0 \le t_0 < T$ and let $y:[t_0,T] \to \R_{\ge 0}$ be a right-continuous function having a finite number $N$ of points of discontinuity $s_1,\ldots,s_N$ satisfying $t_0<s_1<\ldots<s_N\le T$. Let $y$ be such that the left-limit $y(s_j^-)$ exists for all $j=1,\ldots,N$. Let $p\in\R_{\ge 0}$, let $a : \R_{\ge 0} \to \R_{\ge 0}$ be locally integrable, let $\{c_k\}_{k=1}^{\infty}$ be a sequence of nonnegative numbers, and let $\omega \in \ki$. Let $\sigma=\{s_k\}_{k=1}^N$ and define $c:\sigma\to \R_{\ge 0}$ via $c(s_j)=c_j$. If $y$ satisfies
  \begin{align}
    \label{eq:yGron1}
    y(t) &\le p + \int_{t_0}^t a(s) y(s) ds +  \sum_{s\in\sigma\cap (t_0,t]} c(s) \omega(y(s^-))
  \end{align}
  for all $t \in [t_0,T]$, then in the same time interval $y$ satisfies 
  \begin{align}
    y(t) &\le h_k^{t_0}(p,t),
  \end{align}
  where $k=n^{\sigma}_{(t_0,t]}$, and the functions $h_j^{t_0}:\R_{\ge 0}\times [t_0,\infty) \to \R_{\ge 0}$, $j=0,1,\ldots$, are recursively defined as follows
 \begin{align*}
   h_0^{t_0}(p,&t) =p e^{\int_{t_0}^t a(s) ds},\quad\text{and, for $j\ge 1$,}\\
   h_j^{t_0}(p,&t) = h_{j-1}^{t_0}(p,t)+\\
   &c_{j} e^{\int_{t_0}^t a(s) ds} \sup_{t_0 \le s \le t} \left[ \omega(h_{j-1}^{t_0}(p,s))e^{-\int_{t_0}^s a(\tau) d\tau} \right].
 \end{align*}
\end{lema}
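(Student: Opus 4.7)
The plan is to induct on the number of jumps $k:=n^\sigma_{(t_0,t]}$ that have occurred by time $t$. Rather than bounding $y$ directly, I would work with the right-continuous majorant
\[
w(t):=p+\int_{t_0}^t a(s)y(s)\,ds+\sum_{s\in\sigma\cap(t_0,t]} c(s)\,\omega(y(s^-)),
\]
which satisfies $y(t)\le w(t)$, is absolutely continuous between consecutive jumps with $w'(t)=a(t)y(t)\le a(t)w(t)$, and has jumps of size $c_i\omega(y(s_i^-))$ at each $s_i$. The inductive claim I would prove is that $w(t)\le h_k^{t_0}(p,t)$ for every $t\in[s_k,s_{k+1})$, under the convention $s_0:=t_0$ and $s_{N+1}:=T^+$. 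Since $y\le w$, this yields the lemma.

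The base case $k=0$ follows from the classical Gronwall inequality applied to $w'\le a\,w$ on $[t_0,s_1)$ with $w(t_0)=p$, giving $w(t)\le p\,e^{\int_{t_0}^t a(s)\,ds}=h_0^{t_0}(p,t)$. For the inductive step, the induction hypothesis and the continuity of $h_{j-1}^{t_0}(p,\cdot)$ (inherited from local integrability of $a$) yield $w(s_j^-)\le h_{j-1}^{t_0}(p,s_j)$; since $y(s_j^-)\le w(s_j^-)$ and $\omega\in\ki$ is monotone,
\[
w(s_j)=w(s_j^-)+c_j\,\omega(y(s_j^-))\le h_{j-1}^{t_0}(p,s_j)+c_j\,\omega(h_{j-1}^{t_0}(p,s_j)).
\]
A second Gronwall application, this time on the flow interval $[s_j,t]$, gives $w(t)\le w(s_j)\,e^{\int_{s_j}^t a(s)\,ds}$ for $t\in[s_j,s_{j+1})$.

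The final algebraic step exploits the semigroup identity $e^{\int_{s_j}^t a}\cdot e^{\int_{t_0}^{s_j} a}=e^{\int_{t_0}^t a}$. Expanding the definition of $h_{j-1}^{t_0}(p,s_j)$ and enlarging the supremum domain from $[t_0,s_j]$ to $[t_0,t]$ shows $h_{j-1}^{t_0}(p,s_j)\,e^{\int_{s_j}^t a}\le h_{j-1}^{t_0}(p,t)$; and since the supremum appearing in the definition of $h_j^{t_0}(p,t)$ dominates the value of its integrand at $s=s_j$, one obtains
\[
c_j\,\omega(h_{j-1}^{t_0}(p,s_j))\,e^{\int_{s_j}^t a}\le c_j\,e^{\int_{t_0}^t a}\sup_{t_0\le s\le t}\bigl[\omega(h_{j-1}^{t_0}(p,s))\,e^{-\int_{t_0}^s a(\tau)\,d\tau}\bigr].
\]
Adding these two estimates closes the induction.

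The main pitfall I anticipate is the temptation to apply Gronwall globally on $[t_0,t]$ during the inductive step: doing so would weight every past jump contribution by the full propagator $e^{\int_{t_0}^t a}$ and yield a bound strictly larger than $h_j^{t_0}(p,t)$, whose jump contributions are weighted only by the residual propagator $e^{\int_{s_j}^t a}$. The correct move is to invoke Gronwall only on the current flow interval and to absorb the propagator accumulated over earlier intervals into the already-computed $h_{j-1}^{t_0}$; this is exactly the role played by the supremum in the recursive definition of $h_j^{t_0}$.
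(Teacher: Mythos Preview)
Your proposal is correct and essentially identical to the paper's proof: both introduce the same majorant (the paper calls it $z$), induct on the jump index to show $z(t)\le h_k^{t_0}(p,t)$ on $[s_k,s_{k+1})$, apply Gronwall on each flow interval, and close via the propagation inequality $h_{j-1}^{t_0}(p,r)\,e^{\int_r^t a}\le h_{j-1}^{t_0}(p,t)$. The only cosmetic difference is that the paper isolates this last inequality as a preliminary step with its own induction on $j$, whereas you fold it into the final algebraic step---note that ``expanding and enlarging the supremum'' tacitly requires that inner induction down to $h_0^{t_0}$, so in a full write-up you should make that recursion explicit.
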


The function $\omega$ on the right-hand side of inequality (\ref{eq:yGron1}) makes the third term therein not necessarily affine in $y$. This enables the application of Lemma~\ref{lem:ggi2} to impulsive systems without Lipschitz continuity of the jump map. In addition, Lemma~\ref{lem:ggi2} is not a particular case of other existing comparison-type results \citep[such as those in][]{nornes_auto14, lakbai_book89} and is hence interesting in its own right.
\begin{rem}
  \label{rem:hj0}
  If the function $a(\cdot)$ 
  is constant, it follows that $h_j^{t_0}(p,t) = h_j^0(p,t-t_0)$ for all $j \in \N_0$, $p\ge 0$ and $t\ge t_0 \ge 0$. \mer
\end{rem}

The following result is a generalization of Lemma~3 of \citet{haiman_tac18} to the current setting. The proof is given in Section~\ref{sec:proof-lemma-genlem3}.
\begin{lema}
  \label{lem:genlem3}
  Let $\{\Sigma_{\lambda}=(\gamma_{\lambda},f_{\lambda},g_{\lambda})\}_{\lambda \in \Lambda}$ be a strongly 0-GUAS parametrized family of impulsive systems with inputs which satisfies Assumptions \ref{as:ass-f-1} and \ref{as:ass-g-1}. Let $\beta\in\KL$ characterize the 0-GUAS property and let $\nu_f$ and $\nu_g$ be the functions given by Assumptions \ref{as:ass-f-1}.\ref{item:fbound}) and \ref{as:ass-g-1}.\ref{item:gbound}). Let $\chi_f,\chi_g \in \Ki$ satisfy $\chi_f \ge \nu_f$ and $\chi_g \ge \nu_g$. Then, for every $r>0$ and every $\eta>0$, there exist $L=L(r)$, $\kappa = \kappa(r,\eta)$ and $\omega=\omega_r\in \ki$ such that if $x\in\T_{\Sigma_{\lambda}}(t_0,x_0,u)$ with $\lambda\in \Lambda$, $t_0 \ge 0$, $x_0\in\R^n$ and $u\in \U$ satisfies $|x(t)| \le r$ for all $t\ge t_0$, then also
  \begin{multline}
    \label{eq:genlem3bnd}
    |x(t)| \le \beta(|x_0|,t-t_0+n^{\gamma_{\lambda}}_{(t_0,t]}) +\\
    h_{n^{\gamma_{\lambda}}_{(t_0,t]}}^0 \left ((t-t_0+n^{\gamma_{\lambda}}_{(t_0,t]})\eta + \kappa \|u_{(t_0,t]}\|_{\chi_f,\chi_g,\gamma_{\lambda}},t-t_0 \right),
  \end{multline}
  where $h_j^0$, for $j=0,1,\ldots$, are the functions defined in Lemma~\ref{lem:ggi2} in correspondence with $a(s) \equiv L$ and $c_j \equiv 1$.  
\end{lema}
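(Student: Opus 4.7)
My plan is to compare the given trajectory $x$ with the unique zero-input trajectory $z$ starting from the same initial condition, and to control the difference $y(t):=|x(t)-z(t)|$ through Lemma~\ref{lem:ggi2}. I begin by taking $z\in\T_{\Sigma_{\lambda}}(t_0,x_0,\mathbf{0})$: by Assumption~\ref{as:ass-f-1}\ref{item:lips}) such $z$ is unique, and strong $0$-GUAS yields $|z(t)|\le \beta(|x_0|,t-t_0+n^{\gamma_{\lambda}}_{(t_0,t]})\le \beta(r,0)$. Hence $z$ is defined on all of $[t_0,T_x)$ and both $x,z$ remain in the closed ball of radius $R:=\max\{r,\beta(r,0)\}$.

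The next step is to obtain pointwise estimates for the differences $f_{\lambda}(t,\xi,\mu)-f_{\lambda}(t,\xi,0)$ and $g_{\lambda}(t,\xi,\mu)-g_{\lambda}(t,\xi,0)$ when $|\xi|\le R$. Using Assumption~\ref{as:ass-f-1}\ref{item:fcont}) I pick $\delta_f=\delta_f(R,\eta)>0$ with $|f_{\lambda}(t,\xi,\mu)-f_{\lambda}(t,\xi,0)|<\eta$ whenever $|\xi|\le R$, $|\mu|\le\delta_f$, and combine with the growth bound in~\ref{item:fbound}) (together with $\chi_f\ge\nu_f$ and $\chi_f(|\mu|)\ge\chi_f(\delta_f)$ on $\{|\mu|>\delta_f\}$) to produce $\kappa_f=\kappa_f(R,\eta)$ such that
\begin{equation*}
  |f_{\lambda}(t,\xi,\mu)-f_{\lambda}(t,\xi,0)|\le \eta+\kappa_f\chi_f(|\mu|),\quad |\xi|\le R.
\end{equation*}
An identical argument based on Assumption~\ref{as:ass-g-1}\ref{item:gcont}) and~\ref{item:gbound}) gives $\kappa_g=\kappa_g(R,\eta)$ satisfying the analogous estimate with $\chi_g$ in place of $\chi_f$. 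Writing $L:=L_R$ and $\omega:=\omega_R$ as provided by Assumption~\ref{as:ass-f-1}\ref{item:lips}) and Assumption~\ref{as:ass-g-1}\ref{item:0-gcont}), and adding and subtracting $f_{\lambda}(s,x(s),0)$ in~(\ref{eq:solintform}) on flow portions and $g_{\lambda}(\tau,x(\tau^-),0)$ at jump instants, a triangle-inequality bookkeeping (noting $y(t_0)=0$) yields, with $\kappa:=\max(\kappa_f,\kappa_g)$, for every $t\in[t_0,T_x)$,
\begin{multline*}
  y(t)\le P(t)+\int_{t_0}^t L\,y(s)\,ds\\+\sum_{s\in\gamma_{\lambda}\cap(t_0,t]}\omega(y(s^-)),
\end{multline*}
where $P(t):=(t-t_0+n^{\gamma_{\lambda}}_{(t_0,t]})\eta+\kappa\|u_{(t_0,t]}\|_{\chi_f,\chi_g,\gamma_{\lambda}}$.

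Finally, fixing any $T\in[t_0,T_x)$, monotonicity of $P$ gives $P(s)\le P(T)$ for $s\le T$, so the displayed inequality restricted to $s\in[t_0,T]$ is precisely in the form required by Lemma~\ref{lem:ggi2} with $a(\cdot)\equiv L$, $c_j\equiv 1$, $\omega=\omega_R$ and constant $p=P(T)$. The lemma therefore delivers $y(T)\le h_k^{t_0}(P(T),T)$ with $k=n^{\gamma_{\lambda}}_{(t_0,T]}$, which Remark~\ref{rem:hj0} converts to $h_k^0(P(T),T-t_0)$. Combining with $|x(T)|\le|z(T)|+y(T)$ and the strong $0$-GUAS estimate for $z$ produces exactly~(\ref{eq:genlem3bnd}), with $\omega_r:=\omega_R$. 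The key obstacle is the nonlinear jump term $\omega(y(s^-))$, which appears precisely because Assumption~\ref{as:ass-g-1}\ref{item:0-gcont}) only requires uniform continuity rather than Lipschitz continuity of the jump map; this is what forces the use of the nonlinear comparison inequality of Lemma~\ref{lem:ggi2} in place of a standard linear Gronwall argument and, in turn, gives rise to the iterated functions $h_k^0$ in the final bound.
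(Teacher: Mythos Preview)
Your proof is correct and follows essentially the same route as the paper's own argument: compare $x$ with the zero-input trajectory from the same initial condition, split each increment via $f_\lambda(s,x(s),0)$ (resp.\ $g_\lambda(\tau,x(\tau^-),0)$) to separate an ``input'' contribution bounded by $\eta+\kappa\chi_\cdot(|u|)$ from a ``state'' contribution controlled by the Lipschitz constant $L$ (flow) or the modulus $\omega$ (jump), and then apply Lemma~\ref{lem:ggi2} with $a\equiv L$, $c_j\equiv 1$ and $p=P(T)$. The only cosmetic differences are notation ($y$ and $z$ swapped) and that you derive the $\eta+\kappa\chi$ bound directly whereas the paper packages it as a separate claim; your choice $R=\max\{r,\beta(r,0)\}$ coincides with the paper's $r^*=\beta(r,0)$ since $\beta(r,0)\ge r$.
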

%
As in \citet[Lemma~3]{haiman_tac18}, the inequality (\ref{eq:genlem3bnd}) is only useful when its right-hand side is less than $r$, since $|x(t)| \le r$ for all $t\ge t_0$ is already assumed. If $\gamma_\lambda=\emptyset$ (no impulses), and hence $n^{\gamma_{\lambda}}_{(t_0,t]} = 0$, then (\ref{eq:genlem3bnd}) reduces to the corresponding bound in Lemma~3 of \citet{haiman_tac18}.

The following result shows that if a system is strongly 0-GUAS, then UBEBS could be equivalently defined setting $c=0$ in (\ref{eq:cubebs}). This generalizes Lemma~4 of \citet{haiman_tac18} to the current setting. The proof is given in Section~\ref{sec:proof-lemma-0UBEBS}.
\begin{lema}
  \label{lem:0UBEBS} 
  Let $\{\Sigma_{\lambda}=(\gamma_{\lambda},f_{\lambda},g_{\lambda})\}_{\lambda \in \Lambda}$ be a strongly 0-GUAS and UBEBS parametrized family of impulsive systems with inputs which satisfies Assumptions \ref{as:ass-f-1} and \ref{as:ass-g-1}. Then there exist $\tilde\alpha,\tilde\rho_1,\tilde\rho_2\in \Ki$, with $\tilde\rho_1 \ge \nu_f$ and $\tilde\rho_2 \ge \nu_g$, for which the estimate (\ref{eq:0UBEBS}) holds for every $x \in \T_{\Sigma_{\lambda}}(t_0,x_0,u)$ with $\lambda \in \Lambda$, $t_0\ge 0$, $x_0\in \R^n$ and $u\in \U$.
  \begin{align} 
    \label{eq:0UBEBS}
    \tilde\alpha(|x(t)|) \le |x(t_0)| + \|u_{(t_0,t]}\|_{\tilde\rho_1,\tilde\rho_2,\gamma_{\lambda}} \quad \forall t\ge t_0.
 \end{align}
\end{lema}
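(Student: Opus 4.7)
The plan is to remove the additive constant $c\ge 0$ from the UBEBS estimate (\ref{eq:cubebs}) by combining it with strong 0-GUAS via the impulsive Gronwall bound of Lemma~\ref{lem:genlem3}. As a harmless preliminary, I would replace $\rho_1$ by any $\Ki$ function that dominates both $\rho_1$ and $\nu_f$, and likewise $\rho_2$ by one dominating $\rho_2$ and $\nu_g$. This only enlarges $\|u_{(t_0,t]}\|_{\rho_1,\rho_2,\gamma_\lambda}$ in (\ref{eq:cubebs}), so UBEBS persists with the same $\alpha$ and $c$, and henceforth $\rho_1\ge\nu_f$ and $\rho_2\ge\nu_g$.

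Next I would split the argument into two regimes. In the ``large'' regime where $|x(t_0)| + \|u_{(t_0,t]}\|_{\rho_1,\rho_2,\gamma_\lambda} \ge c$, inequality (\ref{eq:cubebs}) immediately yields $(\alpha/2)(|x(t)|) \le |x(t_0)|+\|u_{(t_0,t]}\|_{\rho_1,\rho_2,\gamma_\lambda}$, so (\ref{eq:0UBEBS}) holds in this regime with $\alpha/2$. In the complementary ``small'' regime, $|x(t_0)|<c$ and UBEBS already forces $|x(t)| \le R:=\alpha^{-1}(2c)$ for all $t\ge t_0$. The trajectory therefore stays inside $B_R^n$, so Lemma~\ref{lem:genlem3} applies with $r=R$, $\chi_f=\rho_1$ and $\chi_g=\rho_2$, giving, for every $\eta>0$ and the corresponding $\kappa=\kappa(R,\eta)$,
\begin{align*}
|x(t)| &\le \beta(|x(t_0)|,\tau) \\
&\quad + h_k^0\bigl(\tau\eta + \kappa\|u_{(t_0,t]}\|_{\rho_1,\rho_2,\gamma_\lambda},\,t-t_0\bigr),
\end{align*}
where $k=n^{\gamma_\lambda}_{(t_0,t]}$ and $\tau=t-t_0+k$.

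From this estimate I would bootstrap on time windows. Choose $T_R>0$ with $\beta(R,T_R)\le R/4$, which exists because $\beta\in\KL$. Over $[t_0,t_0+T_R]$ the $\beta$-contribution is at most $R/4$, and by tuning $\eta$ the ``deterministic'' piece $\tau\eta$ of the $h_k^0$ argument can be pushed below $R/4$ as well, while the $\|u\|_{\rho_1,\rho_2,\gamma_\lambda}$-piece funnels directly into the target right-hand side of (\ref{eq:0UBEBS}) after a slight further enlargement of the gains to absorb $\kappa$ and the amplification by $h_k^0$. Restarting from $x(t_0+T_R)$, whose magnitude is now at most $R/2$ plus a controlled input perturbation, and iterating, yields a geometric contraction of the ``state contribution'' together with an additive accumulation of the per-window ``input contributions'' that is dominated by the single global norm $\|u_{(t_0,t]}\|_{\tilde\rho_1,\tilde\rho_2,\gamma_\lambda}$. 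Summing the geometric series defines $\tilde\alpha_B\in\Ki$ for the small regime; setting $\tilde\alpha:=\min\{\alpha/2,\tilde\alpha_B\}$ and majorizing by a $\Ki$ function closes the proof.

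The chief technical obstacle lies in taming $h_k^0$ at each iteration step: since no dwell-time hypothesis is placed on $\{\gamma_\lambda:\lambda\in\Lambda\}$, the number of impulses $k$ inside a window of length $T_R$ is a priori unbounded, and the recursion defining $h^0_k$ can in principle grow with $k$. The resolution exploits two ingredients pulling in the right direction: the ``$+n$'' appearing in $\beta(|x(t_0)|,t-t_0+n)$ embodies exactly the \emph{strong} form of 0-GUAS, so each extra jump produces extra decay in the $\beta$-term; and the a priori UBEBS ceiling $|x(t)|\le R$ permits truncating the right-hand side of Lemma~\ref{lem:genlem3} at $R$ whenever it would exceed it, so no iteration can diverge. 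Making this truncation-plus-decay argument quantitative — thereby closing the bootstrap without any dwell-time assumption — is the main departure from the Lipschitz-jump case treated in \citet{haiman_rpic19}.
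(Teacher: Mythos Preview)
Your overall architecture (enlarge the gains, split into a large/small regime, and in the small regime iterate Lemma~\ref{lem:genlem3} on successive windows) is compatible with the paper's proof, and the paper indeed uses Lemma~\ref{lem:genlem3} in exactly this role. However, the proposal has a genuine gap at the one point you yourself flag as the ``chief technical obstacle'': controlling $h_k^0$ when the number $k$ of jumps in a window is unbounded.

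The two ingredients you invoke do not solve this. The extra decay from the ``$+n$'' in $\beta(|x_0|,t-t_0+n)$ acts only on the first summand of (\ref{eq:genlem3bnd}); it has no effect on the second summand $h_k^0(\cdot,\cdot)$, which is where the blow-up lives. And the truncation $|x(t)|\le R$ merely says the bound (\ref{eq:genlem3bnd}) becomes uninformative when its right-hand side exceeds $R$; it does not give you the contraction $|x(t_{j+1})|\le R/2$ that your bootstrap needs. So over a window $[t_j,t_j+T_R]$ with many jumps, your argument yields nothing beyond $|x(t_{j+1})|\le R$, and the iteration stalls.

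The fix in the paper is not a sharper estimate on $h_k^0$ but a different choice of windows: instead of $[t_j,t_j+T_R]$, one sets
\[
t_{j+1}=\inf\bigl\{t\ge t_j:\ t-t_j+n^{\gamma_\lambda}_{(t_j,t]}\ge T\bigr\},
\]
so that $T\le (t_{j+1}-t_j)+n^{\gamma_\lambda}_{(t_j,t_{j+1}]}\le \tilde k:=\lceil T\rceil+1$. This single move simultaneously guarantees (i) the decay $\beta(\cdot,T)$ at the window's end, and (ii) an \emph{a priori} bound $k\le\tilde k$ on the number of jumps per window, so that $h_k^0\le h_{\tilde k}^0$ uniformly and one can choose $\eta$ and $\delta$ to make $h_{\tilde k}^0(\tilde\delta,\tilde k)$ small. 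The paper then packages this as ``$\lim_{r\to 0^+}\bar\alpha(r)=0$'' for the reachability function $\bar\alpha(r)=\sup\{|x(t)|:|x_0|\le r,\ \|u\|_\lambda\le r\}$, from which (\ref{eq:0UBEBS}) follows by majorizing $\bar\alpha$ with a $\Ki$ function. Your large/small split would also go through once you adopt this windowing, but as written the bootstrap does not close.
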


We now have almost all the ingredients required for proving Theorem \ref{UBEBSand0GASiffiiss}. The only additional step is an $\epsilon$-$\delta$ characterization of the strong iISS property \citep[see][Theorem 3.2]{haiman_rpic19}, stated here so that iISS is uniform over families of systems.
\begin{teo}
  \label{thm:eps-delta}
 Consider the parametrized family  $\Sigma_\Lambda = \{\Sigma_{\lambda} = (\gamma_{\lambda}, f_{\lambda}, g_{\lambda}) \}_{\lambda \in \Lambda}$ of impulsive systems with inputs. Let $\rho_1,\rho_2\in\Ki$. Consider the notation $\|u\|_{\lambda} = \|u\|_{\rho_1,\rho_2,\gamma_{\lambda}}$ and, for $r \ge 0$, $B_r^{\lambda} := \{u\in\U: \| u \|_{\lambda} \le r\}$. Then $\Sigma_\Lambda$ is strongly iISS with gain $(\rho_1,\rho_2)$ if and only if the following conditions hold:
  \begin{enumerate}[i)]
  \item For every $T\ge 0$, $r\ge 0$, $s\ge 0$, there exists $C>0$ such that every $x\in\T_{\Sigma_{\lambda}}(t_0,x_0,u)$ with $\lambda \in \Lambda$, $t_0 \ge 0$, $x_0 \in B_r^n$ and $u \in B_s^{\lambda}$ satisfies $|x(t)| \le C$ for all $t\ge t_0$ such that $t+n^{\gamma_{\lambda}}_{(t_0,t]}\le t_0+T$.\label{item:fc}
  \item For each $\epsilon > 0$, there exists $\delta > 0$ such that every $x\in\T_{\Sigma_{\lambda}}(t_0,x_0,u)$ with $\lambda \in \Lambda$, $t_0 \ge 0$, $x_0 \in B_\delta^n$ and $u \in B_\delta^{\lambda}$ satisfies $|x(t)| \le \epsilon$ for all $t\ge t_0$.\label{item:sisicss}
  \item There exists $\alpha\in\Ki$ such that for every $r,\epsilon > 0$ there exists $T>0$ so that \label{item:gatt}
    \begin{align*}
      \alpha(|x(t)|) &\le \epsilon + \|u\|_{\lambda}
    \end{align*}
for all $x\in\T_{\Sigma_{\lambda}}(t_0,x_0,u)$, $\lambda \in \Lambda$, $t_0 \ge 0$, $x_0 \in B_r^n$, $u \in \U$, and $t\ge t_0$ such that $t + n^{\gamma_{\lambda}}_{(t_0,t]} \ge t_0+T$.
  \end{enumerate}
\end{teo}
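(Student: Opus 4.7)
The plan is to prove the two implications separately. The necessity direction is immediate from the iISS estimate; the sufficiency direction requires constructing a $\KL$ function from the three $\epsilon$-$\delta$ style conditions, closely following the scheme of Theorem~3.2 of \citet{haiman_rpic19} but carefully preserving uniformity over $\lambda\in\Lambda$.

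For necessity, assume $\Sigma_\Lambda$ is strongly iISS with gain $(\rho_1,\rho_2)$ via $\alpha\in\ki$ and $\beta\in\KL$. Item (i) follows by bounding both arguments of $\beta$: if $|x_0|\le r$, $\|u\|_\lambda\le s$ and the hybrid time stays below $T$, then $\alpha(|x(t)|)\le \beta(r,0)+s$, so $C=\alpha^{-1}(\beta(r,0)+s)$ suffices. Item (ii) uses that $\alpha^{-1}(\beta(\delta,0)+\delta)\to 0$ as $\delta\to 0^+$, yielding the desired $\delta$ for each $\epsilon$. For item (iii), the $\KL$-decay of $\beta$ in its second argument yields $T=T(r,\epsilon)$ with $\beta(r,T)\le \alpha(\epsilon)$, and a minor renormalization of $\alpha$ absorbs the constant.

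For sufficiency I would proceed in four steps. In Step 1, condition (ii) is converted into a local bound $|x(t)|\le \sigma(|x_0|+\|u\|_\lambda)$ with $\sigma\in\K$, valid in a neighborhood of zero, via the standard Sontag--Wang monotonization of the map $\epsilon\mapsto\delta(\epsilon)$. In Step 2, I combine this local bound with (i) to produce, for each hybrid-time horizon $T$, a global bound $|x(t)|\le \psi_T(|x_0|+\|u\|_\lambda)$ with $\psi_T\in\ki$, uniform in $\lambda$, valid whenever $t-t_0+n^{\gamma_\lambda}_{(t_0,t]}\le T$. In Step 3, from (iii) I extract (after monotonization) a function $T(r,\epsilon)$ that is continuous and strictly decreasing in $\epsilon$ and nondecreasing in $r$, such that $\alpha(|x(t)|)\le \epsilon+\|u\|_\lambda$ whenever $|x_0|\le r$ and $t-t_0+n^{\gamma_\lambda}_{(t_0,t]}\ge T(r,\epsilon)$. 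In Step 4, I synthesize $\beta\in\KL$ by interpolating Steps 2 and 3: define $\beta(r,\tau)$ so that $\alpha^{-1}\bigl(\beta(r,\tau)\bigr)$ equals $\psi_{\tau}(r)$ for small $\tau$ and decays to zero as $\tau\to\infty$, driven by the inverse of $T(r,\cdot)$.

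The main obstacle is Step 4, the explicit construction of $\beta\in\KL$. One must patch the uniform-on-bounded-hybrid-time bound from Step 2 (which does not decay) together with the asymptotic decay from Step 3 (which carries a residual $\|u\|_\lambda$), producing a single $\KL$ function whose hybrid-time argument is $t-t_0+n^{\gamma_\lambda}_{(t_0,t]}$. The classical interpolation works, but care is required so that the final function is jointly continuous, strictly increasing in its first argument, and monotonically decreasing to zero in its second; the adaptation from the non-impulsive setting of \citet{haiman_auto19} is essentially bookkeeping, namely replacing $t-t_0$ by the hybrid time throughout. Uniformity over $\Lambda$ is preserved automatically because each of (i)--(iii) already builds it in.
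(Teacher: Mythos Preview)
The paper does not actually give a proof of this theorem: it is stated as a known result, with a citation to Theorem~3.2 of \citet{haiman_rpic19}, and the only comment is that the present formulation makes the uniformity over $\lambda\in\Lambda$ explicit. Your outline is precisely the route of that cited result and is essentially correct---necessity is immediate from the iISS estimate, and sufficiency is the standard Lin--Sontag--Wang style synthesis of a $\KL$ function from uniform boundedness, uniform stability, and uniform attractivity, with the hybrid time $t-t_0+n^{\gamma_\lambda}_{(t_0,t]}$ replacing $t-t_0$ throughout.

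One small correction in your necessity argument for item~(iii): you want $T$ with $\beta(r,T)\le\epsilon$ (not $\le\alpha(\epsilon)$), after which $\alpha(|x(t)|)\le\epsilon+\|u\|_\lambda$ follows directly with the \emph{same} $\alpha$ as in the iISS estimate, so no renormalization is needed.
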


\subsection{Proof of Theorem \ref{UBEBSand0GASiffiiss}}
\label{sec:proof-iiss-charact}

 ($\Rightarrow$) Let $x\in \T_{\Sigma_{\lambda}}(t_0,x_0,\mathbf{0})$, with $\lambda\in \Lambda$, $t_0\ge 0$, $x_0\in \R^n$. The estimate (\ref{eq:ciiss}), with $\gamma_{\lambda}$ instead of $\gamma$ reduces to $\alpha(|x(t)|) \le \beta(|x(t_0)|,t-t_0+n^{\gamma_{\lambda}}_{(t_0,t]})$ and hence $|x(t)| \le \alpha^{-1}(\beta(|x(t_0)|,t-t_0+n^{\gamma_{\lambda}}_{(t_0,t]})$. The function $\tilde\beta := \alpha^{-1} \comp \beta$ satisfies $\tilde\beta \in \KL$, and hence (\ref{eq:0-guas}) follows with $\beta$ replaced by $\tilde\beta$. Therefore, clearly strong iISS implies strong 0-GUAS. Consider $\beta \in \KL$ from (\ref{eq:ciiss}), define $\beta_0 \in \Ki$ via $\beta_0(r) = \beta(r,0)$. Define $\psi\in\Ki$ via $\psi(r) := \min\{\beta_0^{-1}(r/2), r/2\}$. Applying $\psi$ to each side of (\ref{eq:ciiss}), we obtain
 \begin{align*}
   \psi\comp\alpha(|x(t)|) &\le \psi\left( \beta_0(|x(t_0)|) + \|u_{(t_0,t]}\|_{\rho_1,\rho_2,\gamma_{\lambda}} \right )\\
                           &\le \psi\left(2\beta_0(|x(t_0)|) \right) + \psi(2 \|u_{(t_0,t]}\|_{\rho_1,\rho_2,\gamma_{\lambda}})\\
                           &\le |x(t_0)| + \|u_{(t_0,t]}\|_{\rho_1,\rho_2,\gamma_{\lambda}},
 \end{align*}
 and hence (\ref{eq:cubebs}) follows with $\alpha$ replaced by $\tilde\alpha := \psi\comp\alpha \in \Ki$. We have thus shown that strong iISS implies UBEBS.
 
 ($\Leftarrow$)
 Let $\tilde\alpha,\tilde\rho_1,\tilde\rho_2\in\Ki$ be given by Lemma~\ref{lem:0UBEBS}, so that (\ref{eq:0UBEBS}) is satisfied.
 We will prove that $\{\Sigma_{\lambda}\}_{\lambda \in \Lambda}$ is strongly iISS with iISS gain $(\tilde\rho_1,\tilde\rho_2)$ by establishing each of the items of Theorem~\ref{thm:eps-delta}.  Here we use the notation $\|u\|_{\lambda} = \|u\|_{\tilde\rho_1,\tilde\rho_2,\gamma_{\lambda}}$.
 
 \ref{item:fc}) Let $T\ge 0$, $r\ge 0$ and $s\ge 0$. Let $x \in \T_{\Sigma_{\lambda}}(t_0,x_0,u)$ with $\lambda \in \Lambda$, $t_0\ge 0$, $x_0\in B_r^n$ and $u\in B^{\lambda}_s$. From (\ref{eq:0UBEBS}), it follows that $\tilde\alpha(|x(t)|) \le r+s$, and hence $|x(t)| \le \tilde\alpha^{-1}(r+s) =: C$ for all $t\ge t_0$. This establishes item~\ref{item:fc}) of Theorem~\ref{thm:eps-delta}.
 
 \ref{item:sisicss}) Let $\epsilon > 0$ and $\delta = \tilde\alpha(\epsilon)/2$. Then, if $x \in \T_{\Sigma_{\lambda}}(t_0,x_0,u)$ with $\lambda \in \Lambda$, $t_0\ge 0$, $x_0\in B_\delta^n$ and $u\in B_\delta^\lambda$, it follows from (\ref{eq:0UBEBS}) that $|x(t)|\le \tilde\alpha^{-1}(2\delta) = \epsilon$ for all $t\ge t_0$. This establishes item~\ref{item:sisicss}) of Theorem~\ref{thm:eps-delta}.
 
 \ref{item:gatt}) Let $\alpha = \tilde\alpha/2 \in\Ki$. Let $r,\epsilon > 0$ and let $x \in \T_{\Sigma_{\lambda}}(t_0,x_0,u)$ with $\lambda \in \Lambda$, $t_0\ge 0$, $x_0\in B_r^n$ and $u\in \U$. 
 We distinguish two cases:
 \begin{enumerate}[(a)]
 \item $\|u\|_{\lambda} \ge r$,
 \item $\|u\|_{\lambda} < r$.
 \end{enumerate}
 In case (a), from (\ref{eq:0UBEBS}) we have $\tilde\alpha(|x(t)|) \le r + \|u_{(t_0,t]}\|_{\lambda} \le r + \|u\|_{\lambda}\le 2\|u\|_{\lambda}$, hence $\alpha(|x(t)|) \le \|u\|_{\lambda} \le \epsilon + \|u\|_{\lambda}$ for all $t\ge t_0$.
 
 Next, consider case (b). From (\ref{eq:0UBEBS}), we have  $\tilde\alpha(|x(t)|) \le r + \|u\|_{\lambda} < 2r$ for all $t\ge t_0$. Then $|x(t)|\le \tilde r:={\tilde{\alpha}}^{-1}(2r)$ for all $t\ge t_0$. Let $\beta \in \KL$ characterize the strong 0-GUAS property, so that (\ref{eq:0-guas}) is satisfied under zero input, and let $L = L(\tilde r) > 0$ and $\omega=\omega_{\tilde r}\in \Ki$ be given by Lemma~\ref{lem:genlem3} with $\chi_f = \tilde\rho_1$ and $\chi_g = \tilde\rho_2$, and let $h_j:= h_j^0$, $j=0,1,\ldots$, be the functions defined in Lemma \ref{lem:ggi2} in correspondence with $a(s) \equiv L$ and $c_j \equiv 1$. Let $\tilde\epsilon = \epsilon$ and $\tilde T>0$ satisfy $\beta(\tilde r,\tilde T)< \tilde\epsilon /2$. Let $\tilde k= \lceil \tilde T \rceil +1$, where $\lceil s\rceil$ denotes the least integer not less than $s\in \R$. Since $h_{\tilde k}$ is continuous and $h_{\tilde k}(0,t)=0$ for all $t\ge 0$, then there exists $\tilde \delta>0$ such that $h_{\tilde k}(\tilde \delta,\tilde k)<\tilde \epsilon/2$. Define $\eta= \frac{\tilde \delta}{2 \tilde k}$ and let $\kappa = \kappa(\tilde r,\eta) > 0$ be given by Lemma~\ref{lem:genlem3}. Set $\delta = \frac{\tilde \delta}{2 \kappa}$ and define $N := \left\lceil \frac{r}{\delta} \right\rceil$ and  $T:= N \tilde k$.
 
 Consider the sequence $t_0=s_0<s_1<\ldots<s_N$, recursively defined as follows: 
 $$s_j=\inf\{t\ge s_{j-1}:t-s_{j-1}+n^{\gamma_{\lambda}}_{(s_{j-1},t]}\ge \tilde T\}.$$
  Consider the intervals $I_i=(s_{i}, s_{i+1}]$, with $i=0,\ldots, N-1$. We claim that there exists $j\le N-1$ for which $\|u_{I_j}\|_{\lambda} \le \delta$. For a contradiction, suppose that $\|u_{I_j}\|_{\lambda} > \delta$ for all $0 \le j \le N-1$. Then, $\|u\|_{\lambda} \ge  \sum_{j=0}^{N-1} \|u_{I_j}\|_{\lambda}> N\delta \ge r$, contradicting case (b). Therefore, let $0\le j\le N-1$ be such that $\|u_{I_j}\| \le \delta$. 
 
 Since $x\in \T_{\Sigma_{\lambda}}(s_{j},x(s_j),u)$ and $|x(t)|\le \tilde r$ for all $t\ge s_j$, from Lemma~\ref{lem:genlem3} it follows that
   \begin{multline*}
     |x(s_{j+1})| \le \beta\left( |x(s_j)|,s_{j+1}-s_{j}+n^{\gamma_{\lambda}}_{I_j} \right)+\\
     h_{n^{\gamma_{\lambda}}_{I_j}}\left( \left[s_{j+1}-s_j+n^{\gamma_{\lambda}}_{I_j}\right] \eta + \kappa \|u_{I_j}\|_{\lambda},\ s_{j+1}-s_j \right).
   \end{multline*}
   Since $|x(s_j)|\le \tilde r$, $\tilde T\le s_{j+1}-s_{j}+n^{\gamma_{\lambda}}_{I_j}\le \tilde k$,  $\tilde k \eta =\tilde \delta/2$, $\kappa \delta \le \tilde \delta/2$ and the functions $h_j(p,t)$ are separately increasing in $p$ and in $t$, and $h_j(p,t)\le h_{\tilde k}(p,t)$ for all $0\le j\le \tilde k$, it follows that
    \begin{align*}
   |x(s_{j+1})| \le \beta(\tilde r,\tilde T)+
   h_{\tilde k}(\tilde \delta, \tilde k)<\frac{\tilde \epsilon}{2}+\frac{\tilde \epsilon}{2}=\tilde \epsilon.
   \end{align*}
 Therefore, using (\ref{eq:0UBEBS}) with $t_0$ replaced by $s_{j+1}$, we reach
 \begin{align*}
   \tilde\alpha(|x(t)|) \le |x(s_{j+1})| + \| u_{(s_{j+1},t]} \|_{\lambda} \le \tilde\epsilon + \|u\|_{\lambda}
 \end{align*}
 for all  $t\ge s_{j+1}$. Since $s_{i+1}-s_{i}+n^{\gamma_{\lambda}}_{(s_{i},s_{i+1}]}\le \tilde k$ for all $0\le i \le N-1$, $s_{j+1}-t_0+n^{\gamma_{\lambda}}_{(t_0,s_{j+1}]}=\sum_{i=1}^j [s_{i+1}-s_{i}+n^{\gamma_{\lambda}}_{(s_{i},s_{i+1}]}]\le N\tilde k=T$. In consequence, if $t\ge t_0$ is such that $t-t_0+n^{\gamma_{\lambda}}_{(t_0,t]}\ge T$, then $t\ge s_{j+1}$, and hence $\tilde\alpha(|x(t)|)\le \tilde\epsilon + \|u\|_{\lambda}$. Since $\alpha =\tilde\alpha/2 \le \tilde\alpha$, it follows that item~\ref{item:gatt}) of Theorem~\ref{thm:eps-delta} also is satisfied.

\section{Strong ISS implies strong iISS}
\label{sec:iss-iiss}

To establish that strong ISS implies strong iISS, we need Assumption~\ref{as:strongcont}, which strengthens Assumptions~\ref{as:ass-f-1} and~\ref{as:ass-g-1}. 
\begin{as}
  \label{as:strongcont}
  The functions $f_{\lambda}$ satisfy B\ref{item:fdeltau})--B\ref{item:fLip}) with the subscript `$a$' replaced by `$f$'. The functions $g_\lambda$ satisfy B\ref{item:fdeltau})--B\ref{item:fdeltax}) with $f_\lambda$ replaced by $g_\lambda$ and the subscript `$a$' replaced by `$g$':
  \begin{enumerate}[B1)]
  \item There exists $\tilde\varphi_a \in \Ki$ and nondecreasing and continuous functions $N_a,O_a : \R_{\ge 0} \to \R_{\ge 0}$ such that\footnote{Recall the notation $a\wedge b = \min\{a,b\}$.}
    \begin{multline*}
      |f_{\lambda}(t,\xi,\mu_1) - f_{\lambda}(t,\xi,\mu_2)|\\ \le
      \tilde\varphi_a(|\mu_1 - \mu_2|) \big[N_a(|\xi|) + O_a\big( |\mu_1| \wedge |\mu_2|\big) \big]
    \end{multline*}
    holds for all $t\ge 0$, $\xi\in\R^n$, $\mu_1,\mu_2 \in \R^m$ and $\lambda\in\Lambda$. \label{item:fdeltau}
  \item There exist $\eta_a,\varphi_a \in \Ki$, and $P_a : \R_{\ge 0} \to \R_{\ge 0}$ nondecreasing and continuous, such that \label{item:fdeltax}
    for all $t\ge 0$, $\xi_1,\xi_2\in\R^n$, $\mu \in \R^m$, and $\lambda\in\Lambda$,
    \begin{multline*}
      |f_{\lambda}(t,\xi_1,\mu) - f_{\lambda}(t,\xi_2,\mu)|\\ \le
      \eta_a(|\xi_1 - \xi_2|) [ P_a(|\xi_1|\wedge |\xi_2|) + \varphi_a(|\mu|) ].
    \end{multline*}
  \item Item B\ref{item:fdeltax}) holds, in addition, with $\eta_a$ such that for every $M\ge 0$ there exists $L^f = L^f(M)$ so that 
    \begin{align}
      \label{eq:etaLip}
      \eta_a(s) &\le L^f s\quad \text{for all }0\le s \le M,
    \end{align}
    where the function $L^f(\cdot)$ is continuous, nondecreasing, and positive for $M>0$. \label{item:fLip}
  \end{enumerate}
\end{as}

It is easy to show that Assumptions \ref{as:ass-f-1} and \ref{as:ass-g-1} follow from Assumption \ref{as:strongcont} and the blanket assumptions $f_{\lambda}(t,0,0)=0$ and $g_{\lambda}(t,0,0)=0$ for all $t\ge 0$ and $\lambda \in \Lambda$.

The conditions in Assumption \ref{as:strongcont} are equivalent to the following set of conditions, which are similar to those considered in Assumption~1 of \citet{haiman_auto19} (see also Lemma~3.4 therein).
  \begin{enumerate}[{A}1)]
  \item There exists $\omega_1\in\Ki$ and for every $r,s\ge 0$, there exists $L_1 = L_1(r,s) \ge 0$ such that
    \begin{align}
      |f_{\lambda}(t,\xi,\mu_1) - f_{\lambda}(t,\xi,\mu_2)| &\le L_1 \omega_1(|\mu_1 - \mu_2|)
    \end{align}
    for all $t\ge 0$, $\xi \in B_r^n$, $\mu_1,\mu_2\in B_s^m$ and $\lambda\in\Lambda$.\label{item:req-contu}
  \item There exists $\omega_2\in\Ki$ and for every $r,s\ge 0$, there exists $L_2 = L_2(r,s) \ge 0$ such that
    \begin{align}
      |f_{\lambda}(t,\xi_1,\mu) - f_{\lambda}(t,\xi_2,\mu)| &\le L_2 \omega_2(|\xi_1 - \xi_2|)
    \end{align}
    for all $t\ge 0$, $\xi_1,\xi_2 \in B_r^n$, $\mu\in B_s^m$ and $\lambda\in\Lambda$.\label{item:req-contx}
  \item Item A\ref{item:req-contx}) holds with $\omega_2(r)\equiv r$. \label{item:req-Lipx}
  \end{enumerate}
By proceeding as in the proof of Lemma~3.4 in \citet{haiman_auto19}, it follows that B\ref{item:fdeltau})--B\ref{item:fdeltax}) are equivalent to A\ref{item:req-contu})--A\ref{item:req-contx}) and that B\ref{item:fdeltau})--B\ref{item:fLip}) are equivalent to A\ref{item:req-contu})--A\ref{item:req-Lipx}). It is thus clear that Assumption~\ref{as:strongcont} imposes local Lipschitz continuity of the flow maps and hence uniqueness of solutions.

Our main result is the following.
\begin{teo}
  \label{thm:issimpliesiiss}
  Let $\{\Sigma_\lambda = (\gamma_\lambda,f_\lambda,g_\lambda)\}_{\lambda\in\Lambda}$ be a strongly ISS parametrized family of impulsive systems with inputs and let Assumption~\ref{as:strongcont} hold. Then, $\{\Sigma_\lambda\}_{\lambda\in\Lambda}$ is strongly iISS.
\end{teo}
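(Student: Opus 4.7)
The plan is to invoke Theorem~\ref{UBEBSand0GASiffiiss}: since Assumption~\ref{as:strongcont} (together with the blanket identities $f_\lambda(t,0,0)=g_\lambda(t,0,0)=0$) implies Assumptions~\ref{as:ass-f-1} and~\ref{as:ass-g-1}, and since strong ISS trivially implies strong 0-GUAS by Remark~\ref{rem:sISS_s0-GUAS}, it suffices to prove that $\{\Sigma_\lambda\}_{\lambda\in\Lambda}$ is UBEBS; Theorem~\ref{UBEBSand0GASiffiiss} will then deliver strong iISS.

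To establish UBEBS I would follow the trajectory-comparison strategy of \citet{haiman_auto19}, adapted to impulsive systems. Fix $\lambda\in\Lambda$, $t_0\ge 0$, $x_0\in\R^n$, $u\in\U$ and compare $x\in\T_{\Sigma_\lambda}(t_0,x_0,u)$ with the zero-input solution $z\in\T_{\Sigma_\lambda}(t_0,x_0,\mathbf{0})$; write $e=x-z$. Strong ISS yields the a priori bound $|x(t)|\le R:=\beta(|x_0|,0)+\rho(\|u\|_{\infty,\gamma_\lambda})$, and strong 0-GUAS gives $|z(t)|\le\beta(|x_0|,0)\le R$. Inserting $f_\lambda(t,z,u)$ (and analogously for the jump map) as an intermediate point and invoking Assumption~\ref{as:strongcont} termwise produces
\begin{align*}
  |\dot e(t)| &\le L^f(2R)\bigl[P_f(R)+\varphi_f(|u(t)|)\bigr]\,|e(t)|\\
              &\quad+\tilde\varphi_f(|u(t)|)\bigl[N_f(R)+O_f(|u(t)|)\bigr],\\
  |e(\tau)|   &\le |e(\tau^-)|+\eta_g(|e(\tau^-)|)\bigl[P_g(R)+\varphi_g(|u(\tau)|)\bigr]\\
              &\quad+\tilde\varphi_g(|u(\tau)|)\bigl[N_g(R)+O_g(|u(\tau)|)\bigr]
\end{align*}
for $\tau\in\gamma_\lambda$. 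Because $\eta_g$ need only be in $\Ki$, the jump bound is nonaffine in $|e(\tau^-)|$, so I would apply the generalized impulsive Gronwall inequality of Lemma~\ref{lem:ggi2} with $\omega=\eta_g$, $a(\cdot)$ equal to the coefficient of $|e(t)|$ above, jump coefficients $c_k=P_g(R)+\varphi_g(|u(\tau_k)|)$, and $p$ equal to the iISS-type norm $\|u_{(t_0,t]}\|_{\tilde\varphi_f,\tilde\varphi_g,\gamma_\lambda}$ scaled by $R$-dependent constants. This would yield $|e(t)|\le H\bigl(R,\|u_{(t_0,t]}\|_{\tilde\varphi_f,\tilde\varphi_g,\gamma_\lambda}\bigr)$ for some function $H$ that is continuous, nondecreasing in both arguments, and vanishing in the second argument at zero.

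Combining the previous estimates gives $|x(t)|\le\beta(|x_0|,0)+H(R,\|u_{(t_0,t]}\|_{\tilde\varphi_f,\tilde\varphi_g,\gamma_\lambda})$, an ISS-flavoured bound that still depends on $R$, and hence on $\|u\|_{\infty,\gamma_\lambda}$. The final conversion to UBEBS form $\alpha(|x(t)|)\le|x_0|+\|u_{(t_0,t]}\|_{\rho_1,\rho_2,\gamma_\lambda}+c$ would be carried out by a case split according to whether $\|u\|_{\infty,\gamma_\lambda}$ is dominated by a function of $|x_0|$: in the ``small input'' regime the radius $R$ is controlled by $|x_0|$ alone, so $H$ reduces to a $\Ki$-function of $\|u\|_{\tilde\varphi_f,\tilde\varphi_g,\gamma_\lambda}$ with $|x_0|$-dependent prefactors that can be absorbed into the $|x_0|$ term of UBEBS after choosing $\alpha\in\Ki$ with sufficiently slow growth; in the complementary regime one chooses $\rho_1,\rho_2\in\Ki$ dominating the products $[N_f(R)+O_f(|u|)]\tilde\varphi_f(|u|)$ and $[N_g(R)+O_g(|u|)]\tilde\varphi_g(|u|)$ uniformly, so that the right-hand side of the comparison estimate is swallowed by $\|u\|_{\rho_1,\rho_2,\gamma_\lambda}$ up to the slack constant $c$. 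The main obstacle is precisely this coupling: the local Lipschitz/continuity constants arising from Assumption~\ref{as:strongcont} depend on $R$ and therefore on $\|u\|_{\infty,\gamma_\lambda}$, while UBEBS must be expressed only through the integral-summation norm, and disentangling the two while handling the non-Lipschitz jump maps (which is what forces the use of Lemma~\ref{lem:ggi2} in place of classical Gronwall) is what makes the impulsive argument substantially more intricate than its flow-only counterpart in \citet{haiman_auto19}.
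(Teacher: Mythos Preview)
Your overall architecture---reduce to UBEBS and then invoke Theorem~\ref{UBEBSand0GASiffiiss}---is exactly the paper's. The gap is in the UBEBS step. The output of Lemma~\ref{lem:ggi2} is $h_k^{t_0}(p,t)$, and this depends not only on $p$ but on the elapsed time $t-t_0$ (through $\int_{t_0}^t a$) and on the number $k$ of jumps in $(t_0,t]$. With your choices $a(s)\ge L^f(2R)P_f(R)$ and $c_j\ge P_g(R)$, the functions $h_k^{t_0}(p,\cdot)$ blow up in both $t-t_0$ and $k$, so the majorant you write as $H(R,\|u\|_{\tilde\varphi_f,\tilde\varphi_g,\gamma_\lambda})$ does not exist: no time- and jump-uniform bound follows from this comparison alone. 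Your case split does nothing about this, since neither branch controls $t-t_0+n^{\gamma_\lambda}_{(t_0,t]}$. A secondary issue is that your a~priori radius $R=\beta(|x_0|,0)+\rho(\|u\|_{\infty,\gamma_\lambda})$ need not be finite for $u\in\U$ with finite integral energy.

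The paper's proof of UBEBS (Theorem~\ref{thm:iss-implies-ubebs}) closes this gap with two devices that are absent from your sketch. First, the comparison trajectory is not the zero-input solution but the solution $x_b$ driven by the \emph{saturated} input $u_b$ with $|u_b|\le b$; strong ISS then yields the decaying bound $|x_b(t)|\le\beta(r,t-t_0+n^{\gamma_\lambda}_{(t_0,t]})+\rho(b)$, and the forcing in the $\Delta x=x-x_b$ inequality involves only $u-u_b$, supported on $\Omega_u(b)=\{|u|>b\}$, whose measure and cardinality are controlled by the integral energy $E$ via $|\Omega_u(b)|\le E/\chi_1(b)$ and Schwarz. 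Second, Lemma~\ref{lem:ggi2} is applied only over windows of bounded hybrid length $t-t_0+n^{\gamma_\lambda}_{(t_0,t]}\le T_r$; choosing $b=b_r=\rho^{-1}(r/3)$ and $T_r$ with $\beta(r,T_r-1)\le r/3$ forces $|x(t_1)|\le r$ at the end of each window, and iterating (Claim~\ref{clm:iter}) converts the window-local Gronwall estimate into the global bound $|x(t)|\le\tilde\alpha(r)$ whenever $r\ge 1+E$. This recursive slicing, together with the saturation trick, is the essential idea you are missing.
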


The structure of the proof of Theorem~\ref{thm:issimpliesiiss} is given in the following diagram. In this diagram, the application of Theorem~\ref{UBEBSand0GASiffiiss} is possible because Assumption~\ref{as:strongcont} implies Assumptions~\ref{as:ass-f-1} and~\ref{as:ass-g-1}.
\begin{equation*}
  \xymatrix@R-\baselineskip@C+4mm{ & \smash{\stackrel{\text{strong}}{\text{0-GUAS}}}\ar[d] \\
    \stackrel{\text{strong}}{\text{ISS}}\ar@2{->}^{\text{Remark}~\ref{rem:sISS_s0-GUAS}\ }[ru]\ar@2{->}_{\tiny{\text{Thm}~\ref{thm:iss-implies-ubebs}}}[rd] & \text{and} \ar@2{->}^{\text{Thm}~\ref{UBEBSand0GASiffiiss}}[r] & \smash{\stackrel{\text{strong}}{\text{iISS}}} \\
    & \ar[u]\text{UBEBS} & }
  \end{equation*}
Before giving the remaining step, indicated as Theorem~\ref{thm:iss-implies-ubebs}, we pose the following simple consequence of Theorem~\ref{thm:issimpliesiiss} and Proposition \ref{prop:weakstrong}.
\begin{cor}
  Let $\{\Sigma_\lambda = (\gamma_\lambda,f_\lambda,g_\lambda)\}_{\lambda\in\Lambda}$ be a weakly ISS parametrized family of impulsive systems with inputs and let Assumption~\ref{as:strongcont} hold. Suppose that $\{\gamma_\lambda\}_{\lambda\in\Lambda}$ is UIB (Definition~\ref{def:UIB}). Then, $\{\Sigma_\lambda\}_{\lambda\in\Lambda}$ is strongly iISS and hence also weakly iISS.
\end{cor}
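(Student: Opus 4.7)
The plan is simply to chain the two earlier results of the paper. First, I would invoke Proposition~\ref{prop:weakstrong}: since the family $\{\gamma_\lambda\}_{\lambda\in\Lambda}$ is UIB and $\{\Sigma_\lambda\}_{\lambda\in\Lambda}$ is weakly ISS by hypothesis, the proposition yields that $\{\Sigma_\lambda\}_{\lambda\in\Lambda}$ is in fact strongly ISS. At this point Assumption~\ref{as:strongcont} is in force, so Theorem~\ref{thm:issimpliesiiss} applies directly and gives strong iISS of the family.

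For the trailing clause ``hence also weakly iISS'', there are two equivalent routes. One may reapply Proposition~\ref{prop:weakstrong}, whose statement gives the equivalence of weak and strong iISS under the UIB hypothesis. Alternatively, one may argue directly by monotonicity: if $\beta\in\KL$ and $\alpha,\rho_1,\rho_2\in\Ki$ render (\ref{eq:ciiss}) valid, then because $n^{\gamma_\lambda}_{(t_0,t]}\ge 0$ and $\beta(r,\cdot)$ is nonincreasing, one has $\beta(|x_0|,t-t_0+n^{\gamma_\lambda}_{(t_0,t]})\le \beta(|x_0|,t-t_0)$, so the weak iISS estimate follows from the strong one with the same functions.

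There is no genuine obstacle here: the corollary is a bookkeeping consequence of Proposition~\ref{prop:weakstrong} and Theorem~\ref{thm:issimpliesiiss}, together with the (already-remarked) fact that Assumption~\ref{as:strongcont} implies Assumptions~\ref{as:ass-f-1} and~\ref{as:ass-g-1}, which are what Proposition~\ref{prop:weakstrong} and the preparatory lemmas feeding into Theorem~\ref{thm:issimpliesiiss} rely upon.
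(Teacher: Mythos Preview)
Your proposal is correct and matches the paper's own proof essentially verbatim: invoke Proposition~\ref{prop:weakstrong} (UIB makes weak and strong ISS/iISS equivalent), then apply Theorem~\ref{thm:issimpliesiiss}. One small overstatement: Proposition~\ref{prop:weakstrong} does not itself rely on Assumptions~\ref{as:ass-f-1}--\ref{as:ass-g-1}; only Theorem~\ref{thm:issimpliesiiss} (via Theorem~\ref{UBEBSand0GASiffiiss}) needs those, and they indeed follow from Assumption~\ref{as:strongcont}.
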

\begin{proof}
  Since $\{\gamma_\lambda\}_{\lambda\in\Lambda}$ is UIB, then by Proposition~\ref{prop:weakstrong} the weak and strong versions of ISS (or iISS) are equivalent. Applying Theorem~\ref{thm:issimpliesiiss}, the result follows.\qed
\end{proof}

We next give a theorem that establishes that strong ISS implies UBEBS. This theorem is an extension to impulsive systems of Theorem~3.12 in \citet{haiman_auto19}. However, due to the absence of any type of Lipschitz continuity assumption on the jump maps, the current proof does not follow straightforwardly from the corresponding one in \citet{haiman_auto19}. Moreover, the proof is not a simple consequence of replacing the application of Gronwall inequality by that of the current Lemma~\ref{lem:ggi2}. Specifically, the expression to be bounded does not anymore have the multiplicative form given as $g_1(r)g_2(s)$ in Lemma~3.11 of \citet{haiman_auto19}, leading to a novel bounding strategy.%
\begin{teo}
  \label{thm:iss-implies-ubebs}
  Let $\{\Sigma_\lambda = (\gamma_\lambda,f_\lambda,g_\lambda)\}_{\lambda\in\Lambda}$ be a strongly ISS parametrized family of impulsive systems with inputs and let Assumption~\ref{as:strongcont} hold. Then, $\{\Sigma_\lambda\}_{\lambda\in\Lambda}$ is UBEBS.
\end{teo}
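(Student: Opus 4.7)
The plan is to compare the input-driven trajectory $x\in\T_{\Sigma_\lambda}(t_0,x_0,u)$ with the zero-input trajectory $\tilde x\in\T_{\Sigma_\lambda}(t_0,x_0,\mathbf{0})$ starting from the same initial condition (unique, because Assumption~\ref{as:strongcont} implies local Lipschitz continuity of $f_\lambda$ in the state), to bound the difference $e(t):=x(t)-\tilde x(t)$ via the impulsive generalized Gronwall inequality of Lemma~\ref{lem:ggi2}, and to combine the resulting estimate with the decay of $|\tilde x(t)|$ delivered by strong 0-GUAS (implied by strong ISS, as noted in Remark~\ref{rem:sISS_s0-GUAS}) in order to extract UBEBS.

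First, strong ISS provides the a priori uniform bound $|x(t)|,|\tilde x(t)|\le M := \beta(|x_0|,0)+\rho(\|u\|_{\infty,\gamma_\lambda})$, which justifies the use of the $M$-dependent local constants in Assumption~\ref{as:strongcont}. Writing $e(t)$ in the integral-plus-sum form~(\ref{eq:solintform}) and splitting each integrand and summand via the intermediate zero-input value, I would apply condition B1) of Assumption~\ref{as:strongcont} to the input-variation parts (for both $f_\lambda$ and $g_\lambda$) to obtain an iISS-type forcing $p(t)$ in $u$ with coefficients depending on $M$; apply B2) and B3) to the flow state-variation part, which (since $\varphi_f(0)=0$) takes the form $L^f(2M)\,P_f(|\tilde x(s)|)\,|e(s)|$; and apply B2) for $g_\lambda$ to the jump state-variation part, producing $P_g(|\tilde x(\tau^-)|)\,\omega_M(|e(\tau^-)|)$ with $\omega_M\in\Ki$ coming from the only-continuity (not Lipschitz) of $g_\lambda$ in the state. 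This yields an inequality of the form
\[
|e(t)| \le p(t) + \int_{t_0}^t a(s)\,|e(s)|\,ds + \sum_{\tau\in\gamma_\lambda\cap(t_0,t]} c(\tau)\,\omega_M(|e(\tau^-)|),
\]
with $a(s) = L^f(2M)P_f(|\tilde x(s)|)$ and $c(\tau) = P_g(|\tilde x(\tau^-)|)$, to which Lemma~\ref{lem:ggi2} applies and gives $|e(t)|\le h_k^{t_0}(p(t),t)$ where $k=n^{\gamma_\lambda}_{(t_0,t]}$.

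The main obstacle is controlling this $h_k^{t_0}$-bound. It is a $k$-fold nested composition of $\omega_M$ with exponentials, and hence not of the clean multiplicative form $g_1(|x_0|,\|u\|_{\infty,\gamma_\lambda})\,g_2(\|u_{(t_0,t]}\|_{\rho_1,\rho_2,\gamma_\lambda})$ available in the non-impulsive case (Lemma~3.11 of \citet{haiman_auto19}), so it cannot be handled by the same direct estimate. I expect the novel bounding strategy to exploit the decay of $|\tilde x|$ from strong 0-GUAS: since $a(s)$ and $c(\tau)$ are continuous functions of $|\tilde x|$ vanishing at zero, they decrease as $s-t_0+n^{\gamma_\lambda}_{(t_0,s]}$ grows, and a careful analysis should show that $h_k^{t_0}(p(t),t)$ is then dominated by a single $\Ki$ function of $p(t)$ with constants depending only on $|x_0|$ and $\|u\|_{\infty,\gamma_\lambda}$. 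A plausible implementation is to split time into a bounded initial window in which direct Gronwall estimation applies and a tail on which the smallness of $|\tilde x|$ forces the nested compositions to stabilize.

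Once such a bound on $|e(t)|$ is in hand, combining $|x(t)|\le|\tilde x(t)|+|e(t)|$ with $|\tilde x(t)|\le\beta(|x_0|,0)$ and performing routine $\Ki$-weakening produces $\alpha,\rho_1,\rho_2\in\Ki$ and $c\ge0$ for which $\alpha(|x(t)|)\le|x_0|+\|u_{(t_0,t]}\|_{\rho_1,\rho_2,\gamma_\lambda}+c$, establishing UBEBS.
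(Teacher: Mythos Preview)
Your plan has a genuine gap: the bound you aim for --- $h_k^{t_0}(p(t),t)$ dominated by a $\Ki$ function of $p(t)$ ``with constants depending only on $|x_0|$ and $\|u\|_{\infty,\gamma_\lambda}$'' --- is \emph{not} a UBEBS estimate. UBEBS requires $\alpha(|x(t)|)\le |x_0|+\|u_{(t_0,t]}\|_{\rho_1,\rho_2,\gamma_\lambda}+c$ with no residual dependence on $\|u\|_{\infty,\gamma_\lambda}$; an input with bounded energy can have arbitrarily large sup norm, so the $M$-dependent constants in your scheme cannot be ``routinely $\Ki$-weakened'' away. A second, related difficulty is that Assumption~\ref{as:strongcont} does \emph{not} force $P_f(0)=P_g(0)=0$ (only that $P_a$ is continuous and nondecreasing), so your coefficients $a(s)=L^f(2M)P_f(|\tilde x(s)|)$ and $c(\tau)=P_g(|\tilde x(\tau^-)|)$ need not vanish as $|\tilde x|\to 0$. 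Hence $\int_{t_0}^t a$ still grows linearly in $t-t_0$ and the $c_j$ do not decay, so the hoped-for ``tail stabilization'' of the nested compositions does not occur.

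The paper's proof avoids both obstacles by comparing $x$ not with the zero-input trajectory but with the trajectory $x_b$ driven by the \emph{truncated} input $u_b$, $|u_b(t)|=\min\{|u(t)|,b\}$, with the level $b=b_r=\rho^{-1}(r/3)$ tied to the state-scale parameter $r$. Then $\|u_b\|_{\infty,\gamma_\lambda}\le b_r$, so every constant that would have depended on $\|u\|_\infty$ depends only on $r$. The input-variation forcing is supported on $\Omega_u(b_r)=\{t:|u(t)|>b_r\}$, and the combination of Schwarz's inequality with the Chebyshev-type bounds $|\Omega_u(b_r)|\le E/\chi_1(b_r)$ and $\#[\gamma\cap\Omega_u(b_r)]\le E/\chi_2(b_r)$ gives a forcing $p\le \bar h_1(r)E/\kappa(r)$ with a denominator $\kappa(r)\in\Ki$ at one's disposal. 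Choosing $\kappa$ large enough makes $p\le\tilde p(r,E)$ whenever $r\ge 1+E$, which via a first-exit-time argument keeps $|\Delta x|\le M_r=r/3$ over a hybrid-time window of length $T_r$; at the end of that window $|x|\le r$ again, and the estimate iterates. The zero-input comparison you propose lacks precisely this mechanism for making the forcing small relative to $r$, so the iteration cannot close.
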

\begin{proof}
  Let $\beta\in\KL$ and $\rho\in\Ki$ characterize the strong ISS property. Define $h_1^a,h_2^a:\R_{\ge 0}^2 \to \R$ via
  \begin{align}
    \label{eq:hdef}
    h_1^a(r,b) &:= N_a(\beta(r,0) + \rho(b)) + O_a(b),\\
    h_2^a(r,b) &:= P_a(\beta(r,0) + \rho(b)),
  \end{align}
  where $a \in \{f,g\}$. Let $L^f : \R_{\ge 0} \to \R_{\ge 0}$ be continuous, nondecreasing, and such that for every $M\ge 0$, (\ref{eq:etaLip}) holds with the subscript `$a$' replaced by `$f$' and $L^f = L^f(M)$. In correspondence with every $r> 0$, define $T_r > 1$ continuous and such that
  \begin{align}
    \label{eq:Tr}
    \beta(r,T_r&-1) \le r/3,\quad\text{and also}\\
    \label{eq:brMr}
    b_r &:= \rho^{-1}(r/3), & M_r &:= r/3,\\
    \bar h_1(r) &:= h_1^f(r,b_r) + h_1^g(r,b_r) & L_r^f &:= L^f(M_r). 
  \end{align}
For each $j\in \N_0$, consider the functions $\tilde h_j : \R_{\ge 0}^4 \to \R_{\ge 0}$ given by 
  \begin{align*}
    \tilde h_0(p,&T,r,s) = pe^{[h_2^f(r,b_r)T + s] L_r^f},\quad \text{and for }j\ge 1,\\
    \tilde h_j(p,&T,r,s) = \tilde h_{j-1}(p,T,r,s)+ \\ 
                        &[h_2^g(r,b_r) + s] e^{[h_2^f(r,b_r)T + s] L_r^f}  \eta_g(\tilde h_{j-1}(p,T,r,s)),
  \end{align*}
  and define, for $r>0$ and $s\ge 0$,
  \begin{multline*}
    \tilde p(r,s) := \sup \Big\{ p \ge 0 : \tilde h_j(p,T,r,s) \le \frac{M_r}{2},\\ \forall (j,T) \text{ s.t. }T\ge 0, T+j \le T_r \Big\}.
  \end{multline*}
  Note that the functions $\tilde h_j$ are nondecreasing in $j$, $p$, $T$, $r$ and $s$, continuous in $(p,T,r,s)$ over $\R_{\ge 0}^4$, and satisfy $\tilde h_j(0,T,r,s) = 0$ for all $j\in\N_0$ and $(T,r,s) \in \R_{\ge 0}^3$. In addition, the function $\tilde h_j(\cdot,T,r,s)$ is increasing for every $j\in\N_0$ and $(T,r,s)\in\R_{\ge 0}^3$, and $\tilde h_j(p,T,r,\cdot)$ is increasing whenever $p>0$ and $r>0$. These facts make $\tilde p(r,s) > 0$ for all $r>0$ and $s\ge 0$, and $\tilde p(r,\cdot)$ decreasing.
%
From the definition of $\tilde p(r,s)$, we have that for all $r>0$ and $s\ge 0$,
  \begin{multline}
    \label{eq:htjbnd}
    \tilde h_j(p,T,r,s) \le M_r/2\\
    \text{whenever } p\le \tilde p(r,s), T\ge 0, T + j \le T_r.
  \end{multline}
%
  Consider the function $\ell : [1,\infty) \to \R_{\ge 0}$, defined via
  \begin{align}
    \label{eq:elldef}
    \ell(\bar r) &:= \sup_{1 \le r \le \bar r} \frac{\bar h_1(r) (r-1)}{\tilde p(r,r-1)}.
  \end{align}
  It is clear that $\ell$ is nondecreasing. 
  \begin{claim}
    \label{clm:ell}
    $\ell(\bar r) < \infty$ for all $\bar r \ge 1$.
  \end{claim}
  \textbf{Proof of Claim~\ref{clm:ell}:} Let $\bar r \ge 1$ and consider
  \begin{align*}
    \bar T := \sup_{1\le r \le \bar r} T_r,\\
    \bar p := \sup \Big\{ p\ge 0 &: \max_{j\in\N_0,j\le \bar T} 
    \tilde h_j(p,\bar T,\bar r,\bar r - 1)\le \frac{M_1}{2} \Big\}.
  \end{align*}
  Since $T_r$ is positive and continuous for $r>0$, then $\bar T$ is finite and positive. From the continuity and monotonicity properties of $\tilde h_j$, it follows that $\bar p > 0$. From the corresponding definitions, it also follows that $\tilde p(r,r-1) \ge \bar p$ for all $1\le r\le \bar r$. In consequence, by also taking into account the continuity of $\bar h_1$ it follows that
  \begin{align*}
  \bar \ell(\bar r) \le \max_{1\le r\le \bar r}\frac{\bar h_1(r)(r-1)}{\bar p}<\infty.
  \end{align*}\mer

It follows that there exists $\kappa \in \Ki$ such that $\ell(r) \le \kappa(r)$ for all $r\ge 1$. Define $\alpha\in\Ki$ via 
\begin{align}
\label{eq:alpha}
\alpha(b) = \kappa(3\rho(b)).
\end{align}
  Given an input $u\in\U$ and a constant $b\ge 0$, let $u_b$ denote a new input, defined as follows
  \begin{align}
    \label{eq:Phiub}
    u_b(t)&=
                    \begin{cases}
                      \dfrac{bu(t)}{|u(t)|} &\text{if }t\in\Omega_u(b),\\
                      u(t) &\text{otherwise,}
              \end{cases}\\
    \label{eq:Omega}
    \Omega_{u}(b) &:= \{ t\ge 0 :  |u(t)| > b \}.
  \end{align}
  Note that $|u_b(t)|= \min\{|u(t)|,b\}$ for all $t\ge 0$ and hence $\|u_b\|_{\infty,\gamma}\le b$ for all $\gamma$.

  Let $\chi_1,\chi_2 \in \Ki$ satisfy $\chi_1 \ge \max\{\varphi_f,\tilde\varphi_f^2,\alpha^2\}$ and $\chi_2 \ge \max\{\varphi_g,\tilde\varphi_g^2,\alpha^2\}$. We will establish UBEBS with gain $(\chi_1,\chi_2)$. Let $t_0 \ge 0$, $\xi\in\R^n$, $\lambda\in\Lambda$, set $\gamma=\gamma_\lambda$, and consider an input $u\in\U$ such that
  \begin{align}
    \label{eq:u1nrg}
    E := \int_0^\infty \chi_1(|u(s)|) ds + \sum_{s\in\gamma} \chi_2(|u(s)|) < \infty.
  \end{align}
  Let $x \in \T_{\Sigma_\lambda}(t_0,\xi,u)$ and define $\tilde\alpha \in \Ki$ via
  \begin{align}
    \label{eq:alptildef}
    \tilde\alpha(r) &= \beta(r,0) + \frac{2r}{3}.
  \end{align}
  \begin{claim}
    \label{clm:iter}
    Let $r$ be any real number such that $r\ge 1 + E$ and $|x(t_0)|\le r$, then
    \begin{align}
      |x(t)| \le \tilde\alpha(r) \quad \forall t\ge t_0.
    \end{align}
  \end{claim}
  \textbf{Proof of Claim~\ref{clm:iter}:} 
  For a fixed $b\ge 0$, let $x_b \in \T_{\Sigma_\lambda}(t_0, x(t_0), u_b)$, and $\Delta x = x - x_b$. From the strong ISS property, then
  \begin{align*}
    |x_b(t)| &\le \beta\left(|x(t_0)|,t-t_0+n^\gamma_{(t_0,t]}\right) + \rho(\|u_b\|_{\infty,\gamma})\\
    &\le \beta(r,0) + \rho(b)
  \end{align*}
  for all $t\ge t_0$. From (\ref{eq:solintform}) and Assumption~\ref{as:strongcont}, it follows that
  \begin{align*}
    &|\Delta x(t)|
 \le \int_{t_0}^t \Big| f_\lambda(s,x(s),u(s)) - f_\lambda(s,x_b(s),u_b(s)) \Big| ds +\\ 
    &\sum_{\tau\in\gamma\cap(t_0,t]} \Big| g_\lambda(\tau,x(\tau^-),u(\tau)) - g_\lambda(\tau,x_b(\tau^-),u_b(\tau)) \Big| \\
 &\le \int_{t_0}^t \Big| f_\lambda(s,x(s),u(s)) - f_\lambda(s,x_b(s),u(s)) \Big| ds +\\
    &\sum_{\tau\in\gamma\cap(t_0,t]} \Big| g_\lambda(\tau,x(\tau^-),u(\tau)) - g_\lambda(\tau,x_b(\tau^-),u(\tau)) \Big| + \\
 &\phantom{\le} \int_{t_0}^t \Big| f_\lambda(s,x_b(s),u(s)) - f_\lambda(s,x_b(s),u_b(s)) \Big| ds + \\
    &\sum_{\tau\in\gamma\cap(t_0,t]} \Big| g_\lambda(\tau,x_b(\tau^-),u(\tau)) - g_\lambda(\tau,x_b(\tau^-),u_b(\tau)) \Big| \\
    &\le \int_{t_0}^t \eta_f(|\Delta x(s)|) {\scriptstyle [P_f(|x(s)|\wedge |x_b(s)|)+\varphi_f(|u(s)|)]} ds +\\
    &\sum_{\tau\in\gamma\cap(t_0,t]} \eta_g(|\Delta x(\tau^-)|) {\scriptstyle [P_g(|x(\tau^-)|\wedge |x_b(\tau^-)|)+\varphi_g(|u(\tau)|)]} +\\
    &\int_{t_0}^t \tilde\varphi_f(|u(s) -u_b(s)|) {\scriptstyle [N_f(|x_b(s)|) +  O_f(|u(s)|\wedge |u_b(s)|)]} ds +\\
    &\sum_{\tau\in\gamma\cap(t_0,t]} \tilde\varphi_g(|u(\tau) -u_b(\tau)|) {\scriptstyle [N_g(|x_b(\tau^-)|) +  O_g(|u(\tau)|\wedge |u_b(\tau)|)]}
  \end{align*}
  holds for all $t\ge t_0$ for which $x(t)$ exists.  Then, for all $t\ge t_0$ for which $x(t)$ exists,
  \begin{align}
    |\Delta &x(t)| \le \int_{t_0}^t \eta_f(|\Delta x(s)|) [h_2^f(r,b)+\varphi_f(|u(s)|)] ds \notag \\ 
    &+\sum_{\tau\in\gamma\cap(t_0,t]} \eta_g(|\Delta x(\tau^-)|) [h_2^g(r,b) + \varphi_g(|u(\tau)|)] \notag\\
    &+ h_1^f(r,b) \int_{t_0}^t \tilde\varphi_f(|u(s) -u_b(s)|) ds \notag \\
    \label{eq:Deltax1}
    &+ h_1^g(r,b) \sum_{\tau\in\gamma\cap(t_0,t]} \tilde\varphi_g(|u(\tau) -u_b(\tau)|).
  \end{align}
  For $t\ge t_0$, we have the following inequalities:
  \begin{align*}
    \int_{t_0}^t \tilde\varphi_f(|u(s) - u_b(s)|) ds 
    &\le \int_{\Omega_{u}(b)} \tilde\varphi_f(|u(s)|) ds,\\
    \sum_{\tau\in\gamma\cap(t_0,t]} \tilde\varphi_g(|u(\tau) -u_b(\tau)|) 
    &\le \sum_{\tau\in\gamma\cap\Omega_u(b)} \tilde\varphi_g(|u(\tau)|)
  \end{align*}
  Applying the Schwarz inequality, then
  \begin{align*}
    \int_{\Omega_{u}(b)} \tilde\varphi_f(|u(s)|) ds &\le |\Omega_{u}(b)|^{1/2} \sqrt{\int_{\Omega_{u}(b)} \tilde\varphi_f^2(|u(s)|) ds}\\
    &\le |\Omega_{u}(b)|^{1/2} \sqrt{E},\quad{\text{and likewise}}\\
    \sum_{\tau\in\gamma\cap\Omega_u(b)} \tilde\varphi_g(|u(\tau)|) 
    &\le \sqrt{\#[\gamma\cap\Omega_u(b)]} \sqrt{E},
  \end{align*}
  where we have used the facts that $\chi_1\ge\tilde\varphi_f^2$ and $\chi_2 \ge \tilde\varphi_g^2$, and where $|\Omega_u(b)|$ denotes the Lebesgue measure of the set $\Omega_u(b)$. 
  Also, we have
  \begin{align*}
    E &\ge \int_{\Omega_{u}(b)} \chi_1(|u(s)|) ds \ge |\Omega_{u}(b)| \chi_1(b), \quad\text{and}\\
    E &\ge \sum_{\tau\in\gamma\cap\Omega_u(b)} \chi_2(|u(\tau)|) \ge \#[\gamma\cap\Omega_u(b)] \chi_2(b),
  \end{align*}
  and hence
  \begin{align*}
    |\Omega_{u}(b)| &\le \frac{E}{\chi_1(b)},\quad \text{and} \quad
                      \#[\gamma\cap\Omega_u(b)] \le \frac{E}{\chi_2(b)} \quad \text{if }b>0.
  \end{align*}
  Combining the obtained inequalities, we reach, for $b>0$,
  \begin{align}
    \label{eq:gtilE1}
    \int_{t_0}^t \tilde\varphi_f(|u(s) - u_b(s)|) ds &\le \frac{E}{\sqrt{\chi_1(b)}} \le \frac{E}{\alpha(b)},\\
    \label{eq:gtilE2}
    \sum_{\tau\in\gamma\cap\Omega_u(b)} \tilde\varphi_g(|u(\tau) -u_b(\tau)|) &\le  \frac{E}{\sqrt{\chi_2(b)}} \le \frac{E}{\alpha(b)},
  \end{align}
  where we have used the facts that $\chi_1 \ge \alpha^2$ and $\chi_2\ge\alpha^2$. Let $b=b_r$. Define
  \begin{align*}
    \iota &:= \inf\{t\ge t_0 : |\Delta x(t)| \ge M_r \}.
  \end{align*}
  We next show that $\iota - t_0 + n^\gamma_{(t_0,\iota]} > T_r$. Suppose on the contrary that  $\iota - t_0 + n^\gamma_{(t_0,\iota]} \le T_r$. From the definition of $\iota$ and the continuity of $\Delta x$ from the right, 
we have $\Delta x(\iota)\ge M_r$ and
  \begin{align}
    \label{eq:Deltax2}
    |\Delta x(t)| &< M_r\quad\text{for all }t_0 \le t < \iota,
  \end{align}
  From~(\ref{eq:etaLip}), then $\eta_f(|\Delta x(t)|) \le L_r^f |\Delta x(t)|$ for all $t_0 \le t < \iota$. From~(\ref{eq:Deltax1}) and (\ref{eq:gtilE1})--(\ref{eq:gtilE2}), then for all $t_0 \le t \le \iota$, we have
  \begin{align}
    |\Delta x(t)| \le p &+ \int_{t_0}^t a(s) |\Delta x(s)| ds\notag\\
    \label{eq:Deltax3}
    &+ \sum_{\tau\in\gamma\cap(t_0,t]} c(\tau) \eta_g( |\Delta x(\tau^-)|),\\
    \text{with}\quad p &= \frac{\bar h_1(r) E}{\alpha(b_r)}=\frac{\bar h_1(r) E}{\kappa(r)},\notag\\
    a(s) &= [h_2^f(r,b_r) + \varphi_f(|u(s)|)] L_r^f,\notag\\
    c(\tau) &= [h_2^g(r,b_r) + \varphi_g(|u(\tau)|)].\notag
  \end{align}
  Note that (\ref{eq:Deltax3}) holds also at $t=\iota$ even if only (\ref{eq:Deltax2}) is true and it happens that $|\Delta x(\iota)| > M_r$. Applying Lemma~\ref{lem:ggi2} with $y(t)=\Delta x(t)$, $T=\iota$, $\sigma = \gamma \cap (t_0,\iota] = \{s_j\}_{j=1}^k$, with $k = n^\gamma_{(t_0,\iota]}$, $\{c_j\}_{k=1}^\infty$, with $c_j =c(s_j)$ for $1\le j\le k$ and $c_j=0$ for $j>k$ and $\omega= \eta_g$, it follows that $\Delta x$ must also satisfy
  \begin{align} \label{eq:deltaiota}
    |\Delta x(\iota)| &\le h_k^{t_0}(p,\iota)
  \end{align}
  with the functions $h_j^{t_0}$, $j\in \N_0$, as defined in Lemma~\ref{lem:ggi2}.
  \begin{claim} 
    \label{clm:h-tildeh}
    For all $p\ge 0$, $t\ge t_0$ and $0\le j\le k$,	
    \begin{align}\label{eq:claimth}
      h_j^{t_0}(p,t) &\le \tilde h_j(p,t-t_0,r,E).
    \end{align}
  \end{claim}
  \textbf{Proof of Claim~\ref{clm:h-tildeh}:} We prove the claim by induction on $j$. For $j=0$, we have that for all $t\ge t_0$
  \begin{align*}
    h_0^{t_0}(p,t)&=p e^{\int_{t_0}^t a(s) ds}\\ 
                  &\le pe^{[h_2^f(r,b_r)(t-t_0) + E] L_r^f}=\tilde h_0(p,t-t_0,r,E)
  \end{align*}
  since
  \begin{align}
    \int_{t_0}^t a(s) ds &= \left [h_2^f(r,b_r)(t-t_0) + \int_{t_0}^t\varphi_f(|u(s)|)ds \right ] L_r^f \nonumber\\
                         &\le [h_2^f(r,b_r)(t-t_0) + E] L_r^f \label{eq:inta}
  \end{align}
  because $\varphi_f \le \chi_1$ and $\|u\|_{\chi_1,\chi_2,\gamma}= E$.
 
  Suppose now that for some $0\le j<k$, (\ref{eq:claimth}) holds for all $t\ge t_0$. Then, from the definition of the function $h_{j+1}^{t_0}$, it follows that 
  \begin{align*}
    h_{j+1}^{t_0}(p,t)= h_{j}^{t_0}(p,t)+c_{j+1}
    \sup_{t_0\le s\le t}\left [ \eta_g(h_{j}^{t_0}(p,s))e^{\int_{s}^t a(\tau) d\tau}\right ].
  \end{align*}
  Since $c_{j+1}=c(s_{j+1})\le h_2^g(r,b_r)+E$, because $\varphi_g\le \chi_2$ and $\|u\|_{\chi_1,\chi_2,\gamma} = E$, and using (\ref{eq:inta}), the nonnegativity of $a$, the inductive hypothesis, and the fact that the functions $\eta_j$ and $\tilde h_j$ are nondecreasing in each of their arguments, it follows that 
  \begin{multline*}
    h_{j+1}^{t_0}(p,t) \le \tilde h_{j}(p,t-t_0,r,E) + [h_2^g(r,b_r)+E]\cdot\\
    e^{[h_2^f(r,b_r)(t-t_0) + E]L_r^f}\eta_g\big(\tilde h_{j}(p,t-t_0,r,E)\big) \\
    =\tilde h_{j+1}(p,t-t_0,r,E),
  \end{multline*}
  and the proof of the claim follows. \mer
 
From Claim \ref{clm:h-tildeh} it then follows that 
\begin{align*}
h_k^{t_0}(p,\iota) \le \tilde h_k(p,\iota-t_0,r,E).
\end{align*}
 
 On the other hand, for all $E \le r-1$, we have
  \begin{align*}
  p &= \frac{\bar h_1(r) E}{\kappa(r)} \le \frac{\bar h_1(r) (r-1)}{\kappa(r) } \le \frac{\bar h_1(r) (r-1)}{\kappa(r) }\frac{\tilde p(r,r-1)}{\tilde p(r,r-1)}\\ &\le \frac{\ell(r)}{\kappa(r)} \tilde p(r,E) \le \tilde p(r,E).
  \end{align*}  
 
 Therefore, since $\iota - t_0 + k \le T_r$, it follows from the definition of $\tilde p$ that $\tilde h_k(p,\iota-t_0,r,E)\le M_r/2$ and then, from (\ref{eq:deltaiota}) that
  \begin{align*}
    |\Delta x(\iota)| \le h^{t_0}_k(p,\iota) \le \tilde h_k(p,\iota-t_0,r,E) \le M_r/2,
  \end{align*}
 which is a contradiction. Thus $\iota - t_0 + n^\gamma_{(t_0,\iota]} > T_r$. Therefore, the solution $x$ can be bounded as follows
  \begin{align*}
    |x(t)| &\le |x_{b_r}(t)| + |\Delta x(t)| \\&\le \beta(r,t-t_0+n^\gamma_{(t_0,t]}) + \rho(b_r) + M_r\\
    &\le \beta(r,0) + \rho(b_r) + M_r = \tilde\alpha(r),
  \end{align*}
  for all $t\ge t_0$ such that $t - t_0 + n^\gamma_{(t_0,t]} \le T_r$. Consider the sequence $t_1 < t_2 < \cdots$, defined recursively as follows, for $j=0,1,2,\ldots$
  \begin{align*}
    t_{j+1} = \inf\{t > t_j : t-t_j + n^\gamma_{(t_j,t]} \ge T_r - 1 \}
  \end{align*}
  Note that $T_r - 1 \le t_{j+1} - t_j + n^\gamma_{(t_j,t_{j+1}]} \le T_r$, and that $t_j \to \infty$ because $\gamma$ has no finite limit points. It follows that
  \begin{align*}
    |x(t_1)| &\le \beta(r,T_r-1) + \rho(b_r) + M_r \le r. 
  \end{align*}
  Shifting the initial time to $t_i$ and applying recursively the preceding reasoning, we obtain
  \begin{align*}
   |x(t)|&\le \tilde \alpha(r)\quad \forall t\in [t_i,t_{i+1}]\\
   |x(t_{i+1})|&\le r.
  \end{align*}
  This concludes the proof of the claim. \mer

If $|x(t_0)|\ge 1 + E$, by applying Claim \ref{clm:iter} with $r=|x(t_0)|$ it follows that $|x(t)|\le \tilde \alpha(|x(t_0)|)$ for all $t\ge t_0$.

If $|x(t_0)|< 1+E$, let $t_1=\inf\{t\ge t_0:|x(t)|\ge 1+E\}$. If $t_1=\infty$, then $|x(t)|< 1+E$ for all $t\ge t_0$. If $t_1$ is finite, then $|x(t)|< 1+E$ for all $t\in [t_0,t_1)$. If $t_1 \notin \gamma$, then $|x(t_1)| = 1+E$. If $t_1 \in \gamma$, then $|x(t_1)| \le |x(t_1^-)| + |g_\lambda(t_1,x(t_1^-),u(t_1))|$. From B\ref{item:fdeltau}) in Assumption~\ref{as:strongcont} and the fact that $\chi_2 \ge \tilde\varphi_g$, it follows that $|g_\lambda(t_1,x(t_1^-),u(t_1)) - g_\lambda(t_1,x(t_1^-),0)| \le E [N_g(1+E) + O_g(0)]$ and from B\ref{item:fdeltax}), also $|g_\lambda(t_1,x(t_1^-),0) - g_\lambda(t_1,0,0)| \le \eta_g(1+E)P_g(0)$. Since in addition $g_\lambda(t_1,0,0) = 0$, then $|x(t_1)| \le (1+E) [1 + N_g(1+E) + O_g(0)] + \eta_g(1+E)P_g(0) =: \Psi(E)$, where $\Psi : \R_{\ge 0} \to \R_{\ge 0}$ is continuous and nondecreasing. 
By applying Claim \ref{clm:iter} with $t_1$ instead of $t_0$ and $r:=\Psi(E) \ge 1+E$ we obtain  $|x(t)|\le \tilde \alpha\comp\Psi(E)$ for all $t\ge t_1$. Therefore $|x(t)|\le \tilde \alpha\comp\Psi(E)$ for all $t\ge t_0$.
Since $\Psi$ is continuous and nondecreasing, there exists $\tilde\Psi\in\Ki$ such that $\Psi(r) \le \Psi(0) + \tilde\Psi(r)$ for all $r\ge 0$.
For all $t\ge t_0$ we have
\begin{align*}
|x(t)|& \le \max\{\tilde \alpha(|x(t_0)|),\tilde \alpha\comp\Psi(E)\}\\
& \le \tilde \alpha(|x(t_0)|)+\tilde \alpha\comp\Psi(E)\\
& \le \tilde \alpha(|x(t_0)|)+ \tilde\alpha(2\tilde\Psi(E)) + \tilde\alpha(2\Psi(0)),
\end{align*} 
where $\psi(\cdot) := \tilde\alpha(2\tilde\Psi(\cdot)) \in \Ki$. It thus follows that the family of impulsive systems is strongly UBEBS with UBEBS gain ($\chi_1,\chi_2$).\qed
\end{proof}

\section{Complementary proofs}
\label{sec:proofs}

\subsection{Proof of Lemma~\ref{lem:ggi2}}
\label{sec:proof-lem-ggi2}

For the sake of simplicity we write $h_j$ instead of $h_j^{t_0}$.

First, we prove that for all $t_0 \le r \le t$, it happens that $h_k(p,r)e^{\int_{r}^t a(s) ds} \le h_{k}(p,t)$ for all $k\in\N_0$. For $k=0$, we have
  \begin{align*}
    h_0(p,r)e^{\int_{r}^t a(s) ds} = p e^{\int_{t_0}^t a(s) ds} = h_0(p,t),
  \end{align*}
  so that the inequality holds with equality for $k=0$. Next, suppose that the inequality holds for some $k\in\N_0$. We have
  \begin{align*}
    h&_{k+1}(p,r)e^{\int_{r}^t a(s) ds}
 = e^{\int_{r}^t a(s) ds} \Big(h_{k}(p,r) + \\
&\phantom{==}c_{k+1} e^{\int_{t_0}^r a(s) ds} \sup_{t_0 \le s \le r} \left[ \omega(h_{k}(p,s))e^{-\int_{t_0}^s a(\tau) d\tau} \right]\Big) \\
    &= h_{k}(p,r) e^{\int_{r}^t a(s) ds} +\\
    &\phantom{==}c_{k+1} e^{\int_{t_0}^t a(s) ds} \sup_{t_0 \le s \le r} \left[ \omega(h_{k}(p,s))e^{-\int_{t_0}^s a(\tau) d\tau} \right]\\
    &\le h_k(p,t) + c_{k+1} e^{\int_{t_0}^t a(s) ds} \sup_{t_0 \le s \le t} \left[ \omega(h_{k}(p,s))e^{-\int_{t_0}^s a(\tau) d\tau} \right]\\
    &= h_{k+1}(p,t),
  \end{align*}
  so that the inequality holds for $k+1$. 

  Define $s_0 := t_0$ and recall that $\{s_k\}_{k=1}^N$, with $s_1 > s_0$, is the sequence of points where $y$ is discontinuous. Let $z:[t_0,T]\to \R_{\ge 0}$ be defined by
\begin{align}
 z(t)=p+\int_{t_0}^t a(s) y(s) ds+ \sum_{s\in \sigma \cap (t_0,t]} c(s) \omega(y(s^-)).
\end{align}
By assumption, $y(t)\le z(t)$ for all $t\in [t_0,T]$. We will prove by induction the following.

{\em Claim:} for all $0\le k\le N-1$, $z(t)\le h_k(p,t)$ for all $s_k\le t< s_{k+1}$.

{\em Case} $k=0$.
We have that for all $t\in [s_0,s_1)$, $\sigma \cap (t_0,t]=\emptyset$. Therefore, for all $t\in [s_0,s_1)$,
$$ z(t)= p+ \int_{t_0}^t a(s) y(s)ds\le p+ \int_{t_0}^t a(s) z(s) ds.$$
Applying Gronwall inequality, we have that 
$$ z(t)\le p e^{\int_{t_0}^t a(s) ds} = h_0(p,t)\quad \forall t\in [s_0,s_1).$$

{\em Recursive step.} Suppose that $z(t)\le h_k(p,t)$ for all $s_k\le t< s_{k+1}$. Since $z(t)= p+\int_{t_0}^{t} a(s) y(s) ds+ \sum_{s\in \sigma \cap (t_0,s_k]} c(s) \omega(y(s^-))$ for all $t\in [s_k,s_{k+1})$, it follows that
\begin{multline*}
  p+\int_{t_0}^{s_{k+1}} a(s) y(s) ds+ \sum_{s\in \sigma \cap (t_0,s_k]} c(s) \omega(y(s^-)) \\ =z(s_{k+1}^-)\le h_k(p,s_{k+1}).
\end{multline*}
Therefore
\begin{align*}
z(s_{k+1}) &=p+\int_{t_0}^{s_{k+1}} a(s) y(s) ds\\
  &\phantom{=p\,} + \sum_{s\in \sigma \cap (t_0,s_k]} c(s) \omega(y(s^-))+c_{k+1} \omega(y(s_{k+1}^-))\\
&=z(s_{k+1}^-)+c_{k+1}\omega(y(s_{k+1}^-))\\
&\le z(s_{k+1}^-)+c_{k+1}\omega(z(s_{k+1}^-))\\
&\le h_k(p,s_{k+1})+c_{k+1} \omega(h_k(p,s_{k+1})).
\end{align*}
Then, for all $s_{k+1}\le t<s_{k+2}$ we have that
\begin{align*}
  z(t)&=z(s_{k+1})+\int_{s_{k+1}}^{t} a(s) y(s) ds\\
      &\le z(s_{k+1})+\int_{s_{k+1}}^{t} a(s) z(s) ds \\
      &\le z(s_{k+1}) e^{\int_{s_{k+1}}^t a(s) ds}\\
      &\le h_k(p,s_{k+1}) e^{\int_{s_{k+1}}^t a(s) ds} + c_{k+1} \omega(h_k(p,s_{k+1})) e^{\int_{s_{k+1}}^t a(s) ds} \\
  &\le h_k(p,t) + c_{k+1} e^{\int_{t_0}^t a(s) ds} \left[\sup_{t_0 \le s \le t} \omega(h_k(p,s)) e^{-\int_{t_0}^s a(\tau) d\tau}\right]\\
  &= h_{k+1}(p,t)
\end{align*}
This establishes the recursive step and concludes the proof of the claim.

From the fact that $z(t)\le h_{N-1}(p,t)$ for all $s_{N-1}\le t<s_N$ and proceeding as in the recursive step it follows that also $z(t)\le h_N(p,t)$ for all $s_N\le t\le T$, which finishes the proof.

\subsection{Proof of Lemma~\ref{lem:genlem3}}
\label{sec:proof-lemma-genlem3}

The proof requires the following Claim, whose proof follows from Appendix~B of \cite{haiman_tac18} and the fact that the functions $f_{\lambda}$ and $g_{\lambda}$ satisfy items i) and ii) of Assumptions \ref{as:ass-f-1} and \ref{as:ass-g-1}, respectively.
\begin{claim}
  \label{clm:dlthbnd}
 For every $r^* > 0$ and $\eta > 0$ there exists $\kappa = \kappa(r^*,\eta) > 0$ such that for all $\lambda \in \Lambda$, $t\ge 0$, $\xi\in B_{r^*}^n$ and $\mu\in\R^m$,
  \begin{align*}
    |f_{\lambda}(t,\xi,\mu) - f_{\lambda}(t,\xi,0)| &\le \eta + \kappa \nu_f(|\mu|)\quad \text{and}\\
     |g_{\lambda}(t,\xi,\mu) - g_{\lambda}(t,\xi,0)| &\le \eta + \kappa \nu_g(|\mu|)
  \end{align*}
\end{claim}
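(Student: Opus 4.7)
The plan is to split the input magnitude $|\mu|$ into two regimes (small and large) and exploit the two available properties in each. Concretely, fix $r^* > 0$ and $\eta > 0$. I would apply Assumption~\ref{as:ass-f-1}.\ref{item:fcont}) with $r = r^*$ and $\varepsilon = \eta$ to obtain $\delta = \delta(r^*,\eta) > 0$ such that for every $\lambda \in \Lambda$, $t \ge 0$, $\xi \in B_{r^*}^n$ and $|\mu| \le \delta$ one has $|f_\lambda(t,\xi,\mu) - f_\lambda(t,\xi,0)| < \eta$. In that regime the desired bound $\eta + \kappa\nu_f(|\mu|)$ holds for any $\kappa \ge 0$.

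For the complementary regime $|\mu| > \delta$, I would combine the triangle inequality with Assumption~\ref{as:ass-f-1}.\ref{item:fbound}), using $\nu_f(0) = 0$, to obtain
\begin{align*}
|f_\lambda(t,\xi,\mu) - f_\lambda(t,\xi,0)| &\le N_f(r^*)\bigl(2 + \nu_f(|\mu|)\bigr).
\end{align*}
Since $\nu_f \in \K$ is strictly increasing, $\nu_f(\delta) > 0$, and for $|\mu| > \delta$ we have $\nu_f(|\mu|) > \nu_f(\delta)$, which yields $2 \le 2\nu_f(|\mu|)/\nu_f(\delta)$. Substituting,
\begin{align*}
|f_\lambda(t,\xi,\mu) - f_\lambda(t,\xi,0)| &\le N_f(r^*)\Bigl(1 + \tfrac{2}{\nu_f(\delta)}\Bigr)\nu_f(|\mu|).
\end{align*}
Thus setting $\kappa := N_f(r^*)\bigl(1 + 2/\nu_f(\delta)\bigr)$, which depends only on $r^*$ and $\eta$ (through $\delta$), gives the stated bound in both regimes, and uniformly over $\lambda$ and $t$ because $N_f$, $\nu_f$ and $\delta$ are independent of these variables.

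The argument for $g_\lambda$ is structurally identical, using Assumption~\ref{as:ass-g-1}.\ref{item:gbound})--\ref{item:gcont}) in place of the corresponding items for $f_\lambda$. Taking the maximum of the two constants $\kappa$ yields a single $\kappa(r^*,\eta)$ that works for both inequalities. The only subtle point, and the one to watch, is making sure the threshold $\delta$ is strictly positive so that $\nu_f(\delta)$ and $\nu_g(\delta)$ appear in denominators only as positive numbers; this is automatic since item~\ref{item:fcont}) (resp.~\ref{item:gcont})) guarantees $\delta > 0$. No further ingredient is needed, so I expect this to be a short and direct verification rather than a place where any genuine difficulty arises.
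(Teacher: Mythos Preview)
Your proposal is correct and follows essentially the same approach that the paper indicates (the paper defers to Appendix~B of \citet{haiman_tac18}, noting that only items~\ref{item:fbound}) and~\ref{item:fcont}) of Assumptions~\ref{as:ass-f-1} and~\ref{as:ass-g-1} are used): split on $|\mu|\le\delta$ versus $|\mu|>\delta$, use uniform continuity at $\mu=0$ in the first regime, and use the growth bound together with $\nu_f(|\mu|)\ge\nu_f(\delta)>0$ to absorb the additive constant in the second. Your handling of both regimes and the final choice $\kappa=\max\{\kappa_f,\kappa_g\}$ is clean and complete.
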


{\bf Proof of Lemma~\ref{lem:genlem3}:}
  Fix $r>0$ and $\eta>0$, and define $r^* := \beta(r,0) \ge r$. Let $L= L(r) > 0$ be a Lipschitz constant for $f_{\lambda}(t,\cdot,0)$ on the compact set $B_{r^*}^n$ and valid for every $t\ge 0$ and every $\lambda \in \Lambda$ (such a constant exists due to iii) of Assumption \ref{as:ass-f-1}). Let $\omega=\omega_{r^*} \in \Ki$ be such that $|g_{\lambda}(t,\xi_1,0) - g_{\lambda}(t,\xi_2,0)| \le \omega(|\xi_1-\xi_2|)$ for all $\xi_1,\xi_2\in B_{r^*}^n$, all $t\ge 0$ and all $\lambda \in \Lambda$ [such a function exists due to \ref{item:0-gcont}) of Assumption~\ref{as:ass-g-1}]. Let $\kappa$ be the quantity given by Claim~\ref{clm:dlthbnd} in correspondence with $r^*$ and $\eta$. Let $x\in\T_{\Sigma_{\lambda}}(t_0,x_0,u)$ with $\lambda \in \Lambda$, $t_0 \ge 0$, $x_0\in\R^n$ and $u \in \U$ satisfy $|x(t)| \le r$ for all $t\ge t_0$. Let $y\in\T_{\Sigma_{\lambda}}(t_0,x_0,\mathbf{0})$. Then, $x(t),y(t) \in B_{r^*}^n$ for all $t\ge t_0$. Let $t \ge t_0$. For all $t_0 \le \tau \le t$, we have, using (\ref{eq:solintform}),
  \begin{multline*}
    |x(\tau) - y(\tau)| \le \int_{t_0}^\tau \Big|f_\lambda\big(s,x(s),u(s)\big) - f_\lambda\big(s,y(s),0\big)\Big| ds\\  +\sum_{s\in \gamma_{\lambda}\cap(t_0,\tau]} \Big|g_{\lambda}\big(s,x(s^-),u(s)\big) - g_{\lambda}\big(s,y(s^-),0\big)\Big|
  \end{multline*}
  Adding and subtracting $f_{\lambda}(s,x(s),0)$ and $g_{\lambda}(s,x(s^-),0)$ within the respective norm signs, employing the bound on $f_{\lambda}$ and $g_{\lambda}$ given by Claim~\ref{clm:dlthbnd} and recalling the definition of $L$ and $\kappa$, it follows that
  \begin{align*}
    |f_{\lambda}(s,x(s),u(s)) &- f_{\lambda}(s,y(s),0)|\\
    &\le \eta + \kappa \nu_f(|u(s)|)+ L|x(s) - y(s)|,\\
    |g(s,x(s^-),u(s)) &- g(s,y(s^-),0)|\\
    &\le \eta + \kappa \nu_g(|u(s)|)+ \omega(|x(s^-) - y(s^-)|).
  \end{align*}
  Defining $z(t) = |x(t) - y(t)|$, then for all $t_0 \le \tau \le t$,
  \begin{align*}
    z(\tau) &\le \int_{t_0}^t [\eta + \kappa \chi_f(|u(s)|)] ds + \sum_{s\in \gamma_{\lambda}\cap (t_0,t]} [\eta + \kappa \chi_g(|u(s)|)]\\
    &\phantom{\le}+ \int_{t_0}^\tau L z(s) ds + \sum_{s\in \gamma_{\lambda}\cap (t_0,\tau]} \omega(z(s^-))\\
    &\le \left[t-t_0+n^{\gamma_{\lambda}}_{(t_0,t]}\right]\eta + \kappa \|u_{(t_0,t]}\|_{\chi_f,\chi_g,\gamma_{\lambda}}+ L \int_{t_0}^\tau  z(s) ds\\
    &\phantom{\le}+ \sum_{s\in \gamma_{\lambda}\cap (t_0,\tau]} \omega(z(s^-))
  \end{align*}
  The result then follows from application of Lemma~\ref{lem:ggi2} (recall Remark~\ref{rem:hj0}) and the fact that  $|x(t)| \le |y(t)| + z(t) \le \beta(|x_0|,t-t_0+n^{\gamma_{\lambda}}_{(t_0,t]}) + z(t)$.

\subsection{Proof of Lemma~\ref{lem:0UBEBS}}
\label{sec:proof-lemma-0UBEBS}

  Let $\alpha$, $\rho_1$, $\rho_2$ and $c$ be as in the estimate~(\ref{eq:cubebs}). Let $\tilde\rho_1 := \max\{\rho_1,\nu_f\}$ and $\tilde\rho_2 := \max\{\rho_2,\nu_g\}$. For $r\ge 0$ define
  \begin{multline*}
    \bar\alpha(r) := \sup\Big\{ |x(t)|: x\in \T_{\Sigma_{\lambda}}(t_0,x_0,u),\\
    \lambda\in \Lambda,\; t\ge t_0\ge 0,\; |x_0|\le r,\;  \|u\|_{\lambda}\le r\Big\}
  \end{multline*}
where $\|u\|_{\lambda}:=\|u\|_{\tilde\rho_1,\tilde\rho_2,\gamma_{\lambda}}$. From this definition, it follows that $\bar\alpha$ is nondecreasing and from (\ref{eq:cubebs}) that it is finite for all $r\ge 0$.
Next, we show that $\lim_{r\to 0^+}\bar\alpha(r)=0$. Let $\beta\in \KL$ be the function which characterizes the strong 0-GUAS property of the family of systems. Let $r^*=\alpha^{-1}(2+c)$ and let $L = L(r^*) > 0$ and $\omega=\omega_{r^*}\in \Ki$ be given by Lemma~\ref{lem:genlem3} and let $h_j^0$, $j=0,1,\ldots$, be the functions defined in Lemma \ref{lem:ggi2} in correspondence with $a(s) \equiv L$ (recall Remark~\ref{rem:hj0}) and $c_j \equiv 1$. Let $\varepsilon>0$ be arbitrary. Pick $0<\delta_1<1$ such that $\delta_1 \le \beta(\delta_1,0)<\varepsilon/2$, and $T>0$ such that $\beta(\delta_1,T)<\delta_1/2$. Let $\tilde k= \lceil T \rceil+1$, where $\lceil s\rceil$ denotes the least integer not less than $s\in \R$. Since $h_{\tilde k}^0$ is continuous and $h_{\tilde k}^0(0,t)=0$ for all $t\ge 0$, then there exists $\tilde \delta>0$ such that $h_{\tilde k}^0(\tilde \delta,\tilde k)< \delta_1/2$. Define $\eta= \frac{\tilde \delta}{2 \tilde k}$ and let $\kappa = \kappa(r^*,\eta) > 0$ be given by Lemma~\ref{lem:genlem3}. Set $\delta_2 = \min\{\frac{\tilde \delta}{2 \kappa},1\}$.

Then, for every $x\in \T_{\Sigma_{\lambda}}(t_0,x_0,u)$, with $\lambda \in \Lambda$, $t_0\ge 0$, $|x_0|\le \delta_1$, $\|u\|_{\lambda}\le \delta_2$, we claim that $|x(t)|<\varepsilon$ for all $t\ge t_0$. First, note that under the given bounds for $x_0$ and $u$, from (\ref{eq:cubebs}) it follows that $\alpha(|x(t)|) \le \delta_1 + \delta_2 + c \le 2 + c$, and hence $|x(t)| \le r^*$ for all $t\ge t_0 \ge 0$. Consider the sequence $t_0<t_1<t_2<\cdots$, recursively defined as follows:
$$ t_{j+1}=\inf\{t\ge t_{j}:t-t_{j}+n^{\gamma_{\lambda}}_{(t_j,t]}\ge T\}, \quad j\ge 0.$$

We note that $T\le t_{j+1}-t_{j}+n^{\gamma_{\lambda}}_{(t_j,t_{j+1}]}\le \tilde k$ and that $t_j\to \infty$ \citep[see the proof of][Lemma~3.3]{haiman_rpic19}.  Let $I_j=(t_{j},t_{j+1}]$ for $j\ge 0$. 
The application of Lemma~\ref{lem:genlem3} with $\chi_f = \tilde\rho_1$ and $\chi_g = \tilde\rho_2$ gives the estimate (\ref{eq:genlem3bnd}) for all $t\ge t_0$. Then, by taking into account that $n^{\gamma_{\lambda}}_{(t_0,t]}\le \tilde k$ for all $t\in I_0$, that the functions $h_j^0$ are separately increasing in their arguments and $h_j^0 \le h_{j+1}^0$ for all $j\ge 0$, the definitions of $T$, $\eta$, $\delta_1$ and $\delta_2$, and (\ref{eq:genlem3bnd}), it follows that for all $t\in I_0$
\begin{align*}
  |x(t)| &\le \beta(|x_0|,0) + h_{\tilde k}^0(\tilde \delta,\tilde k)<\frac{\varepsilon}{2}+\frac{\delta_1}{2}\le \varepsilon\quad\text{and}\\
 |x(t_1)| &\le \beta(|x_0|,T) + h_{\tilde k}^0(\tilde \delta,\tilde k)<\frac{\delta_1}{2}+\frac{\delta_1}{2}\le \delta_1.
 \end{align*} 

By using recursively the same argument on each interval $I_j$ we obtain than $|x(t)|<\varepsilon$ for all $t\in I_j$ and $|x(t_{j+1})|<\delta_1$. In consequence, $|x(t)|<\varepsilon$ for all
$t\ge t_0$ as we claim.
Thus, if $\delta=\min\{\delta_1,\delta_2\}$, for all
$x\in \T_{\Sigma_{\lambda}}(t_0,x_0,u)$, with $\lambda\in \Lambda$, $t_0\ge 0$, $|x_0|\le \delta$ 
and $\|u\|_{\lambda}\le \delta$, we have $|x(t)|\le \varepsilon$ for all $t\ge t_0$. Therefore,
$\bar\alpha(r)\le \bar\alpha(\delta)<\varepsilon$ for all
$0<r<\delta$ and $\lim_{r\to 0^+}\bar\alpha(r)=0$.
 
Since $\bar\alpha$ is nondecreasing and
$\lim_{r\to 0^+}\bar\alpha(r)=0$ there exists $\hat \alpha \in \Ki$ such
that $\hat \alpha(r)\ge \bar\alpha(r)$ for all $r\ge 0$. Let
$x \in \T_{\Sigma_{\lambda}}(t_0,x_0,u)$ with $\lambda \in \Lambda$, $t_0\ge 0$, $x_0 \in \R^n$ and $u\in \U$. Let $t\ge t_0$ and let $u_{(t_0,t]}$ be the input which coincides with $u$ on $(t_0,t]$ and is zero elsewhere. From well-known results on differential equations, there exists
$x^*\in \T_{\Sigma_{\lambda}}(t_0,x_0,u_{(t_0,t]})$ such that $x^*(\tau)=x(\tau)$ for all
$\tau \in [t_0,t]$. By using the definition of $\bar\alpha$ and the
fact that $\hat \alpha(r)\ge \bar\alpha(r)$, we then have
  $|x(t)| = |x^*(t)| \le \hat\alpha(|x_0|) + \hat\alpha(\|u_{(t_0,t]}\|_{\lambda})$.
Define $\tilde\alpha \in \Ki$ via $\tilde\alpha(s) = \hat\alpha^{-1}(s)/2$. Applying $\tilde\alpha$ to both sides of the preceding inequality and using the fact that $\tilde\alpha(a+b) \le \tilde\alpha(2a) +\tilde\alpha(2b)$, we reach
  $\tilde\alpha(|x(t)|) \le |x_0| + \|u_{(t_0,t]}\|_{\lambda}$,
which establishes the result.

\section{Conclusions}
\label{sec:conclusions}

We have considered a strong version of asymptotic stability for time-varying impulsive systems whereby the convergence to zero of a state trajectory depends not only on elapsed time but also on the number of jumps that occur. In this setting, we have established that strong ISS implies strong iISS. This implication is established without resorting to any type of Lyapunov function because the latter may not exist for the type of systems considered. Future work may consider determining to what extent the current results may apply when stability is understood in the usual (weak) sense.



\section*{References}

\bibliography{/home/hhaimo/latex/strings.bib,/home/hhaimo/latex/complete_v2.bib,/home/hhaimo/latex/Publications/hernan_v2.bib}

\begin{thebibliography}{37}
\expandafter\ifx\csname natexlab\endcsname\relax\def\natexlab#1{#1}\fi
\expandafter\ifx\csname url\endcsname\relax
  \def\url#1{\texttt{#1}}\fi
\expandafter\ifx\csname urlprefix\endcsname\relax\def\urlprefix{URL }\fi

\bibitem[{Angeli(1999)}]{ang_scl99}
Angeli, D., 1999. Intrinsic robustness of global asymptotic stability. Systems
  and Control Letters 38~(11), 297--304.

\bibitem[{Angeli et~al.(2000)Angeli, Sontag, and Wang}]{angson_dc00}
Angeli, D., Sontag, E.~D., Wang, Y., 2000. Further equivalences and semiglobal
  versions of integral input to state stability. Dynamics and Control 10~(2),
  127--149.

\bibitem[{Cai and Teel(2005)}]{caitee_cdc05}
Cai, C., Teel, A.~R., 2005. Results on input-to-state stability for hybrid
  systems. In: Proc. 44th IEEE Conf. on Decision and Control, Seville, Spain.
  pp. 5403--5408.

\bibitem[{Cai and Teel(2009)}]{caitee_scl09}
Cai, C., Teel, A.~R., 2009. Characterizations of input-to-state stability for
  hybrid systems. Systems and Control Letters 58~(1), 47--53.

\bibitem[{Chen and Zheng(2009)}]{chezhe_auto09}
Chen, W.-H., Zheng, W.~X., 2009. Input-to-state stability and integral
  input-to-state stability of nonlinear impulsive systems with delays.
  Automatica 45, 1481--1488.

\bibitem[{Dashkovskiy and Feketa(2017)}]{dasfek_nahs17}
Dashkovskiy, S., Feketa, P., 2017. Input-to-state stability of impulsive
  systems and their networks. Nonlinear Analysis: Hybrid Systems 26, 190--200.

\bibitem[{Dashkovskiy et~al.(2012)Dashkovskiy, Kosmykov, Mironchenko, and
  Naujok}]{daskos_nahs12}
Dashkovskiy, S., Kosmykov, M., Mironchenko, A., Naujok, L., 2012. Stability of
  interconnected impulsive systems with and without time delays, using
  {L}yapunov mehtods. Nonlinear Analysis: Hybrid Systems 6, 899--915.

\bibitem[{Dashkovskiy and Mironchenko(2013{\natexlab{a}})}]{dasmir_mcss13}
Dashkovskiy, S., Mironchenko, A., 2013{\natexlab{a}}. Input-to-state stability
  of infinite-dimensional control systems. Mathematics of Control, Signals and
  Systems 25, 1--35.

\bibitem[{Dashkovskiy and Mironchenko(2013{\natexlab{b}})}]{dasmir_siamjco13}
Dashkovskiy, S., Mironchenko, A., 2013{\natexlab{b}}. Input-to-state stability
  of nonlinear impulsive systems. SIAM J. Control and Optimization 51~(3),
  1962--1987.

\bibitem[{Edwards et~al.(2000)Edwards, Lin, and Wang}]{edwlin_cdc00}
Edwards, H., Lin, Y., Wang, Y., 2000. On input-to-state stability for time
  varying nonlinear systems. In: Proc. 39th IEEE Conf. on Decision and Control,
  Sydney, Australia. pp. 3503--3506.

\bibitem[{Haimovich and Mancilla-Aguilar(2018{\natexlab{a}})}]{haiman_aadeca18}
Haimovich, H., Mancilla-Aguilar, J.~L., 2018{\natexlab{a}}. A characterization
  of {iISS} for time-varying impulsive systems. In: Argentine Conference on
  Automatic Control (AADECA). Buenos Aires, Argentina, pp. 1--6.

\bibitem[{Haimovich and Mancilla-Aguilar(2018{\natexlab{b}})}]{haiman_tac18}
Haimovich, H., Mancilla-Aguilar, J.~L., 2018{\natexlab{b}}. A characterization
  of integral {ISS} for switched and time-varying systems. IEEE Trans. on
  Automatic Control 63~(2), 578--585.

\bibitem[{Haimovich and Mancilla-Aguilar(2019)}]{haiman_auto19}
Haimovich, H., Mancilla-Aguilar, J.~L., 2019. {ISS} implies {iISS} even for
  switched and time-varying systems (if you are careful enough). Automatica
  104, 154--164.

\bibitem[{Haimovich et~al.(2019)Haimovich, Mancilla-Aguilar, and
  Cardone}]{haiman_rpic19}
Haimovich, H., Mancilla-Aguilar, J.~L., Cardone, P., 2019. A characterization
  of strong {iISS} for time-varying impulsive systems. In: XVIII Reuni\'on de
  Trabajo en Procesamiento de la Informaci\'on y Control (RPIC), Bah\'{\i}a
  Blanca, Argentina. Accepted. Available at https://arxiv.org/abs/1907.11673.

\bibitem[{Hale(1980)}]{hale_book80}
Hale, J.~K., 1980. Ordinary Differential Equations. Robert E. Krieger
  Publishing Company, Malabar, Florida.

\bibitem[{Hespanha et~al.(2008)Hespanha, Liberzon, and Teel}]{heslib_auto08}
Hespanha, J.~P., Liberzon, D., Teel, A., 2008. Lyapunov conditions for
  input-to-state stability of impulsive systems. Automatica 44~(11),
  2735--2744.

\bibitem[{Jiang and Wang(2001)}]{jiawan_auto01}
Jiang, Z.~P., Wang, Y., 2001. Input-to-state stability for discrete-time
  nonlinear systems. Automatica 37~(6), 857--869.

\bibitem[{Lakshmikantham et~al.(1989)Lakshmikantham, Bainov, and
  Simeonov}]{lakbai_book89}
Lakshmikantham, V., Bainov, D., Simeonov, P.~S., 1989. Theory of impulsive
  differential equations. Vol.~6 of Modern Applied Mathematics. World
  Scientific.

\bibitem[{Li and Li(2019)}]{lili_mcs19}
Li, P., Li, X., 2019. Input-to-state stability of nonlinear impulsive systems
  via {L}yapunov method involving indefinite derivative. Mathematics and
  Computers in Simulation 155, 314--323.

\bibitem[{Li et~al.(2018)Li, Li, and Wang}]{lili_scl18}
Li, X., Li, P., Wang, Q.-G., 2018. Input/output-to-state stability of impulsive
  switched systems. Systems and Control Letters 116, 1--7.

\bibitem[{Li et~al.(2017)Li, Zhang, and Song}]{lizha_auto17}
Li, X., Zhang, X., Song, S., 2017. Effect of delayed impulses on input-to-state
  stability of nonlinear systems. Automatica 76, 378--382.

\bibitem[{Liberzon et~al.(2014)Liberzon, Ne\v{s}i\'c, and Teel}]{libnes_tac14}
Liberzon, D., Ne\v{s}i\'c, D., Teel, A.~R., 2014. Lyapunov-based small-gain
  theorems for hybrid systems. IEEE Trans. on Automatic Control 59~(6),
  1395--1410.

\bibitem[{Liu et~al.(2014)Liu, Dou, and Hill}]{liudou_scl14}
Liu, B., Dou, C., Hill, D.~J., 2014. Robust exponential input-to-state
  stability of impulsive systems with an application in micro-grids. Systems
  and Control Letters 65, 64--73.

\bibitem[{Liu et~al.(2018)Liu, Hill, and Sun}]{liuhil_amc18}
Liu, B., Hill, D.~J., Sun, Z., 2018. Input-to-state-{$\KL$}-stability and
  criteria for a class of hybrid dynamical systems. Applied Mathematics and
  Computation 326, 124--140.

\bibitem[{Liu et~al.(2011)Liu, Liu, and Xie}]{liuliu_auto11}
Liu, J., Liu, X., Xie, W.-C., 2011. Input-to-state stability of impulsive and
  switching hybrid systems with time-delay. Automatica 47, 899--908.

\bibitem[{Mancilla-Aguilar and Garc\'{\i}a(2001)}]{mangar_scl01}
Mancilla-Aguilar, J.~L., Garc\'{\i}a, R.~A., 2001. On converse {L}yapunov
  theorems for {ISS} and i{ISS} switched nonlinear systems. Systems and Control
  Letters 42, 47--53.

\bibitem[{Mancilla-Aguilar and Haimovich(2019)}]{manhai_tac19arxiv}
Mancilla-Aguilar, J.~L., Haimovich, H., 2019. Uniform input-to-state stability
  for switched and time-varying impulsive systems. IEEE Trans. on Automatic
  ControlSubmitted. Available as https://arxiv.org/abs/1904.03440.

\bibitem[{Mironchenko and Wirth(2017)}]{mirwir_tac17}
Mironchenko, A., Wirth, F., 2017. Characterizations of input-to-state stability
  for infinite-dimensional systems. IEEE Trans. on Automatic Control.

\bibitem[{Mironchenko et~al.(2018)Mironchenko, Yang, and
  Liberzon}]{miryan_auto18}
Mironchenko, A., Yang, G., Liberzon, D., 2018. Lyapunov small-gain theorems for
  networks of not necessarily {ISS} hybrid systems. Automatica 88, 10--20.

\bibitem[{Ning et~al.(2018)Ning, He, Wu, and Zhou}]{ninhe_is18}
Ning, C., He, Y., Wu, M., Zhou, S., 2018. Indefinite {L}yapunov functions for
  input-to-state stability of impulsive systems. Information Sciences 436--437,
  343--351.

\bibitem[{Noroozi et~al.(2017)Noroozi, Khayatian, and
  Geiselhart}]{norkha_mcss17}
Noroozi, N., Khayatian, A., Geiselhart, R., 2017. A characterization of
  integral input-to-state stability for hybrid systems. Mathematics of Control,
  Signals and Systems 29~(13).

\bibitem[{Noroozi et~al.(2014)Noroozi, Ne\v{s}i\'c, and Teel}]{nornes_auto14}
Noroozi, N., Ne\v{s}i\'c, D., Teel, A.~R., 2014. Gronwall inequality for hybrid
  systems. Automatica 50, 2718--2722.

\bibitem[{Peng(2018)}]{peng_ietcta18}
Peng, S., 2018. {L}yapunov-{K}rasovskii-type criteria on {ISS} and {iISS} for
  impulsive time-varying delayed systems. IET Control Theory and Appl. 12~(11),
  1649--1657.

\bibitem[{Peng et~al.(2018)Peng, Deng, and Zhang}]{penden_scl18}
Peng, S., Deng, F., Zhang, Y., 2018. A unified {R}azumikhin-type criterion on
  input-to-state stability of time-varying impulsive delayed systems. Systems
  and Control Letters 116, 20--26.

\bibitem[{Sontag(1989)}]{sontag_tac89}
Sontag, E.~D., 1989. Smooth stabilization implies coprime factorization. IEEE
  Trans. on Automatic Control 34, 435--443.

\bibitem[{Sontag(1998)}]{sontag_scl98}
Sontag, E.~D., 1998. Comments on integral variants of {ISS}. Systems and
  Control Letters 34~(1--2), 93--100.

\bibitem[{Yang et~al.(2019)Yang, Peng, Lv, and Li}]{yanpen_mcs19}
Yang, X., Peng, D., Lv, X., Li, X., 2019. Recent progress in impulsive control
  systems. Mathematics and Computers in Simulation 155, 244--268,
  http://doi.org/10.1016/j.matcom.2018.05.003.

\end{thebibliography}
\end{document}